  \def\cC{{\mathcal{C}}} 
  \def\cK{{\mathcal{K}}} 
 \def\cN{{\mathcal{N}}}  
  \def\cS{{\mathcal{S}}} 
 \def\cV{{\mathcal{V}}}
\def\argmax{\mathop{\mathrm{argmax}}}
\def\Re{\mathop{\mathrm{Re}}}
\def\Im{\mathop{\mathrm{Im}}}
\def\bSigma{{\pmb{\Sigma}}}
\def\bOmega{{\pmb{\Omega}}}
 \def\bmu{{\pmb{\mu}}}
\def\b0{{\pmb{0}}} 
\def\ba{{\mathbf{a}}} \def\bb{{\mathbf{b}}} \def\bc{{\mathbf{c}}} 
 \def\bff{{\mathbf{f}}} \def\bg{{\mathbf{g}}} \def\bh{{\mathbf{h}}}
\def\bm{{\mathbf{m}}}   
  \def\bs{{\mathbf{s}}} \def\bt{{\mathbf{t}}}
\def\bu{{\mathbf{u}}}  \def\bw{{\mathbf{w}}} \def\bx{{\mathbf{x}}}
\def\by{{\mathbf{y}}} \def\bz{{\mathbf{z}}}  
\def\bA{{\mathbf{A}}}   
 \def\bF{{\mathbf{F}}}  
\def\bI{{\mathbf{I}}}
\DeclarePairedDelimiter\norm{\lVert}{\rVert}
\newtheorem{lemma}{Lemma}
\newtheorem*{lemma*}{Lemma}
\begin{document}
	%
	% paper title
	% Titles are generally capitalized except for words such as a, an, and, as,
	% at, but, by, for, in, nor, of, on, or, the, to and up, which are usually
	% not capitalized unless they are the first or last word of the title.
	% Linebreaks \\ can be used within to get better formatting as desired.
	% Do not put math or special symbols in the title.

	\title{Coverage Increase at THz Frequencies: \\ A Cooperative Rate-Splitting Approach}

	%
	%
	% author names and IEEE memberships
	% note positions of commas and nonbreaking spaces ( ~ ) LaTeX will not break
	% a structure at a ~ so this keeps an author's name from being broken across
	% two lines.
	% use \thanks{} to gain access to the first footnote area
	% a separate \thanks must be used for each paragraph as LaTeX2e's \thanks
	% was not built to handle multiple paragraphs
	%
	
	\author{Hyesang Cho,~\IEEEmembership{Student Member,~IEEE}, Beomsoo Ko,~\IEEEmembership{Student Member,~IEEE}, Bruno Clerckx,~\IEEEmembership{Fellow,~IEEE},	and~Junil Choi,~\IEEEmembership{Senior Member,~IEEE}% <-this % stops a space
		\thanks{H. Cho, B. Ko, and J. Choi are with the School of Electrical Engineering, Korea Advanced Institute of Science and Technology, Daejeon 34141, South Korea (e-mail: \{nanjohn96, kobs0318, junil\}@kaist.ac.kr).
			
		B. Clerckx is with the Department of Electrical and Electronic Engineering, Imperial College London, London SW7 2AZ, U.K (e-mail: b.clerckx@imperial.ac.uk).
		}% <-this % stops a space
		%\thanks{Junil Choi is the corresponding author.}
	}

	\maketitle
	
	% As a general rule, do not put math, special symbols or citations
	% in the abstract or keywords.

	\begin{abstract} \label{sec:abs} % abstract
		Numerous studies claim that terahertz (THz) communication will be an essential piece of sixth-generation wireless communication systems. 
		Its promising potential also comes with major challenges, in particular the reduced coverage due to harsh propagation loss, hardware constraints, and blockage vulnerability. 
		To increase the coverage of THz communication, we revisit cooperative communication. 
		We propose a new type of cooperative rate-splitting (CRS) called extraction-based CRS (eCRS).
		Furthermore, we explore two extreme cases of eCRS, namely, identical eCRS and distinct eCRS.	
		To enable the proposed eCRS framework, we design a novel THz cooperative channel model by considering unique characteristics of THz communication.
		Through mathematical derivations and convex optimization techniques considering the THz cooperative channel model, we derive local optimal solutions for the two cases of eCRS and a global optimal closed form solution for a specific scenario. 
		Finally, we propose a novel channel estimation technique that not only specifies the channel value, but also the time delay of the channel from each cooperating user equipment to fully utilize the THz cooperative channel.
		In simulation results, we verify the validity of the two cases of our proposed framework and channel estimation technique.
	\end{abstract}
	
	%%%%%%%%%%%%%%%%%%%%%%%%%%%%%%%%%%%%%%%%%%%%%%
	%%                                          %%
	%% The keywords begin here                  %%
	%%                                          %%
	%% Put each keyword in separate \kwd{}.     %%
	%%                                          %%
	%%%%%%%%%%%%%%%%%%%%%%%%%%%%%%%%%%%%%%%%%%%%%%
	
	\begin{IEEEkeywords} \label{sec:key}
		THz communication, cooperative communication, rate-splitting multiple access, coverage increase
	\end{IEEEkeywords}
	
	% MSC classifications codes, if any
	%\begin{keyword}[class=AMS]
	%\kwd[Primary ]{}
	%\kwd{}
	%\kwd[; secondary ]{}
	%\end{keyword}

	%\end{fmbox}% uncomment this for twcolumn layout

	%%%%%%%%%%%%%%%%%%%%%%%%%%%%%%%%%%%%%%%%%%%%%%
	%%                                          %%
	%% The Main Body begins here                %%
	%%                                          %%
	%% Please refer to the instructions for     %%
	%% authors on:                              %%
	%% http://www.biomedcentral.com/info/authors%%
	%% and include the section headings         %%
	%% accordingly for your article type.       %%
	%%                                          %%
	%% See the Results and Discussion section   %%
	%% for details on how to create sub-sections%%
	%%                                          %%
	%% use \cite{...} to cite references        %%
	%%  \cite{koon} and                         %%
	%%  \cite{oreg,khar,zvai,xjon,schn,pond}    %%
	%%  \nocite{smith,marg,hunn,advi,koha,mouse}%%
	%%                                          %%
	%%%%%%%%%%%%%%%%%%%%%%%%%%%%%%%%%%%%%%%%%%%%%%
	
	%%%%%%%%%%%%%%%%%%%%%%%%% start of article main body
	% <put your article body there>
	\section{Introduction}\label{sec:intro}
	Along with the constant evolution of modern technology, wireless communication has steadily developed to satisfy increasing demands of high data traffic in fifth-generation wireless communication systems. To satisfy the demands, wireless communication systems have implemented novel technologies such as massive multiple-input multiple-output (MIMO) systems and millimeter-wave technology \cite{mmWave1, mmWave2, mmWave3}. While these technologies are sufficient for current requirements, innovative approaches are in need to meet the requirements of sixth-generation (6G) systems \cite{SAM}.
	
	One promising technology for 6G systems is terahertz (THz) communication, which exploits broad and vacant frequency bands ranging from $0.1$ THz to $10$ THz \cite{THz1, THz2, THz3}. While a primitive THz system has already succeeded in $100$ Gbps communication \cite{THzbad3}, THz communication is still in its infancy due to limitations in both poor channel conditions and low signal power. In THz communication, the propagation loss degrades the signal power in orders of magnitude larger than the conventional frequency bands, and molecular absorption reduces the signal power even further, where the degradation is exponentially proportional with respect to distance \cite{THzbad1, THzbad2}. Also, direct line-of-sight (LoS) links become vulnerable to blockage due to minimal scatterings and decreased multipaths.  Furthermore, limited hardware specifications are additional factors to surmount. Although various approaches to generate the THz frequency signals were studied, the signal generators suffer from low transmit power \cite{THzbad3, THzbad4, THzbad5}. The THz frequency signals were generated using photonic devices and photomixing in \cite{THzbad3}, but the transmit power was extremely low due to the limited gain of the uni-travelling-carrier photodiode. While the non-photonic device was used to generate signals of $300$ GHz carrier frequency in \cite{THzbad4}, it was inadequate for the high power regime since the power conversion had linearity only in the low transmit power regime.

	There have been several studies considering unique characteristics of THz communication. Considering that the far-field assumption does not apply to short-range wireless communication in THz frequencies, it was shown in \cite{THzwork1} that multiplexing in the direct LoS environment is possible, where reconfigurable array architectures were proposed to exploit this effect. Other works considered relaying systems, where the studies focused on increasing the performance from the detrimental channels due to the molecular absorption \cite{THzwork2, THzwork3}.
	
	In this paper, we revisit the concept of cooperative communication to overcome the harsh channel conditions and low signal power devices in THz communication. Cooperative communication is an idea that gained attention in the early 2000s, and the basic concept is to cooperate among user equipments (UEs) to gain better performance for the overall system, e.g., the strong UE can additionally use its power to increase the coverage of the system by assisting weak UEs \cite{Coop1, Coop2}. To be clear, we distinguish the cooperative communication system from the relay system. In the relay system, the relaying devices only transmit messages for other UEs, while in the cooperative communication system, the cooperating UEs not only transmit but also receive messages for themselves. While the concept was promising, one main weakness of cooperative communication was the selfish behavior of the UEs. In practice, no UE would want to use its resource to help other UEs. However, nowadays, people possess multiple devices, and the future is envisaged as an Internet-of-Things society. Thus, we can strongly anticipate the increase of devices \cite{Future1}. By utilizing multiple devices, cooperative communication can be successful for future wireless communication such as 6G.
	
	To effectively support cooperative communication, we need to adopt a proper multiple access technique, e.g., non-orthogonal multiple access (NOMA), spatial-division multiple access (SDMA), or rate-splitting multiple access (RSMA) \cite{NOMA1, SDMA1, RSMA6}. In NOMA, a successive interference cancellation (SIC) technique is adopted to fully decode the interference, while in SDMA, the interference is fully considered as noise in general. 
	RSMA is a superset of NOMA and SDMA and bridges those two schemes to further increase the performance.
	This is achieved by enabling RSMA to partially decode interference, thanks to the creation of a common stream and the presence of SIC, and partially treating the remaining interference as noise \cite{RSMA1}.
	To that end, RSMA divides the messages of the UEs into private and common parts, where the private parts are separately encoded into private streams, and the common parts are jointly encoded into a common stream. From the UE standpoint, the common stream is decoded first while the private streams are considered as noise. Then, by removing the decoded common stream, the desired private stream is decoded with reduced interference.
	
	It has been shown that RSMA includes both SDMA and NOMA as extreme cases and has superior performance in both perfect channel state information (CSI) and imperfect CSI cases \cite{RSMA6,RSMA1}. 
	Due to its superior characteristics, RSMA gained attention in many studies and was implemented in  various applications to increase performance \cite{RSMA6}. 
	For example, the concept of cooperative rate-splitting (CRS) was proposed in \cite{RSMA2, RSMA3 ,RSMA7}, where the cooperating UEs relay the common message to increase the minimum rate of the UEs. The sum rate and energy efficiency was improved in a non-orthogonal unicast multicast (NOUM) problem with RSMA \cite{RSMA4}.
	
	In this paper, we consider a THz multiple-user (MU) multiple-input single-output (MISO) downlink system, where an access point (AP) is not able to communicate with a UE, denoted as a destination UE (dUE), due to, for example, a blocked channel.
	The other UEs, denoted as medium UEs (mUEs), will then cooperate by sharing their resources to achieve reliable communication for the dUE while the mUEs also decode their own messages as well. 
	By introducing a two-phase cooperative communication system, the AP transmits messages to the mUEs in the first phase, and the mUEs transmit messages to the dUE in the second phase. 
	The contributions of this paper are summarized as follows:
	\begin{itemize}
		\item For the two-phase cooperative communication, we propose a new type of CRS, named extraction-based CRS (eCRS). 
		The proposed eCRS has two main differences compared to conventional CRS.
		The first difference appears when the mUEs relay the messages to the dUE in the second phase.
		In conventional CRS, the common message that includes not only the message for the dUE, but also the messages for the mUEs is directly relayed to the dUE as in \cite{RSMA2, RSMA3, RSMA7}.
		This may cause an inefficient use of resources since the mUEs transmit messages unrequired for the dUE.
		In contrast, in eCRS, each mUE extracts the message for the dUE from the common message, and then relays only the message for the dUE in the second phase. 
		Through this approach, the mUEs can use their power solely for the message of the dUE.
		The second difference appears in the two-phase transmission framework, which explains how the messages are split and combined.
		While conventional CRS is only based on a single common message \cite{RSMA2, RSMA3, RSMA7}, we establish a transmission framework for eCRS by employing multiple common messages, similar to the general RSMA \cite{RSMA1}.
		
		\item We also investigate two extreme cases of eCRS, namely, identical eCRS (IeCRS) and distinct eCRS (DeCRS), where both cases only require one SIC layer. 
		IeCRS, which is the same as NOUM in the first phase, transmits an identical common message to the mUEs, and then the mUEs decode-and-forward the identical message to the dUE.
		In contrast, DeCRS transmits distinct common messages to each mUE, and then each mUE decode-and-forwards the distinct message to the dUE.
%		Since every mUE has to decode the identical common message in IeCRS, the performance can be bounded by the mUE with the worst channel condition. However, in DeCRS, since each mUE decodes distinct common message, multiple common messages can be adopted to the heterogeneous channel conditions of the mUEs.   
		Optimization problems are formulated for both IeCRS and DeCRS. 
		Through mathematical derivations and convex optimization techniques, we show that IeCRS and DeCRS are both effective in different scenarios.
	
		\item We develop a THz cooperative channel model by considering unique characteristics of THz communication. 
		In conventional cooperative communication, it is assumed that the transmit signals from the multiple mUEs arrive at the dUE in a single tap.
		However, for cooperative communication in THz communication systems, the signals from the mUEs arrive in different taps due to the large bandwidth and high sampling rate.
		By judiciously taking the multipaths through the mUEs into account, the THz cooperative channel can obtain additional gain, which has not been considered in existing literature.
%		The THz cooperative channel can obtain additional gain by judiciously taking the multipaths through the mUEs into account, which has not been considered in existing literature.
		
		\item We propose a novel channel estimation technique to fully utilize the developed THz cooperative channel model. 
		While the channel gain may be obtainable with conventional channel estimation techniques for frequency selective channels, the THz cooperative channel model requires additional information of both the channel gain and time delay for each specific mUE.
		By utilizing the characteristics of sequences such as pseudo-noise or Zadoff-chu sequences, we derive an estimation technique to not only estimate the channel gain, but also identify the delays of all the mUEs. 
		In result, we verify that the channel estimation technique works sufficiently well even in the low transmit power regime and also show the performances of the two cases of our proposed framework with imperfect CSI.
	\end{itemize}

	The rest of paper is organized as follows. 
	Section \ref{sec:model} describes the transmission framework of eCRS and details of IeCRS and DeCRS.
	In Section \ref{sec:IDT}, we formulate and solve the minimum rate maximization problem for IeCRS through convex optimization techniques. 
	We also derive an effective closed form solution with minimal complexity and optimal performance for IeCRS. 
	In Section \ref{sec:DDT}, we formulate and solve the optimization problem for DeCRS. 
	Section \ref{sec: est} describes the specifics of the channel estimation technique adequate for the THz cooperative channel model. 
	Section \ref{sec: simul} shows the simulation results of the channel estimation technique and cooperative communication techniques with perfect and imperfect CSI. 
	Finally, Section \ref{sec:concl} concludes our paper.
	
	\textbf{Notation:} Lower and upper boldface letters represent column vectors and matrices. $\bA^{*}$  and $\bA^{\mathrm{H}}$ denote the conjugate, and conjugate transpose of the matrix $\bA$. ${\mathbb{C}}^{m \times n}$ and ${\mathbb{R}}^{m \times n}$ represent the set of all $m \times n$ complex and real matrices. $|{\cdot}|$ denotes the amplitude of the scalar, and $\norm{\cdot}$ represents the $\ell_2$-norm of the vector. $\mathbb{Z}$ denotes the set of integers.  $\mathcal{O}$ denotes the Big-O notation. The Kronecker product is denoted by $\otimes$.
	$\boldsymbol{0}_m$ is used for the $m\times1$ all zero vector, and $\bI_m$ denotes the $m \times m$ identity matrix. $\cC \cN(\bm,\bSigma)$ denotes the circularly symmetric complex Gaussian distribution with mean $\bm$ and variance $\bSigma$.
	
	% \begin{figure}[t] 
	%	\centering
	%	\includegraphics[width=1 \columnwidth]{SystemMod.eps}
	%	%   % where an .eps filename suffix will be assumed under latex,
	%	%   % and a .pdf suffix will be assumed for pdflatex
	%	\caption{Single UAV downlink system with $K$ UEs and $W$~IRSs.} 
	%	\label{fig:System model}
	%\end{figure}

	\begin{figure} 
		\centering
		\includegraphics[width=0.9 \columnwidth]{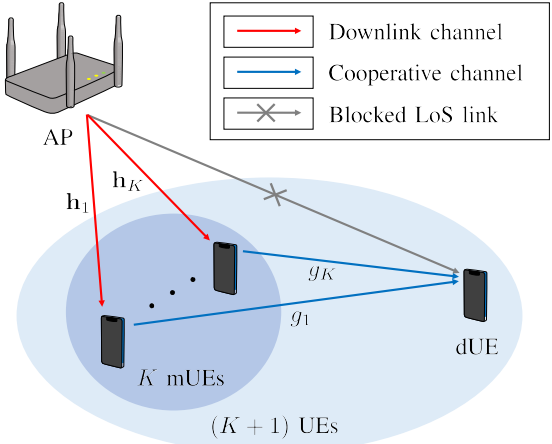}
		\caption{THz MU-MISO downlink system operating in two phases.} 
		\label{fig:System model}
	\end{figure}	
	
	\section{System Model}\label{sec:model}
	In this paper, we consider the MU-MISO downlink system operating in THz frequency bands.
	Because of the reduced multipath effect in the THz frequencies, we consider the system in which only LoS links exist.
	In the system, the AP equipped with $N_t$ antennas serves $(K+1)$ single antenna UEs.\footnote{ The assumption of single antenna UEs is effectively the same as UEs with multiple antennas employing beamforming. We aim to cover a more general case of UEs with multiple antennas supporting multiplexing in our future work.} 
	Among all UEs, the $K$ UEs denoted as the mUEs have the LoS links from the AP, whereas the UE denoted as the dUE experiences blockage of the LoS link from the AP. 
	Since there is no other link to the dUE, it is impossible for the AP to serve every UE with the conventional non-cooperative multiple access techniques such as NOMA, SDMA, and RSMA.
	
	To resolve this issue, we adopt CRS, which enables the AP to communicate with every UE. As illustrated in Fig. \ref{fig:System model}, CRS operates in two phases. 
	The AP transmits messages to the $K$ mUEs in the first phase, and then the $K$ mUEs transmit messages to the dUE in the second phase.
 	To avoid inefficient use of resource while transmitting the message for the dUE, we propose eCRS that extracts the message for the dUE from the common message.
	%We also propose eCRS regarding how the message for the dUE are cooperatively relayed through the $K$ mUEs.
	Furthermore, we investigate two extreme cases of eCRS, namely, IeCRS and DeCRS.
	The details of the framework are explained in the following subsections.
	Throughout the paper, the $K$ mUEs are indexed by a set $\mathcal{K}=\left\{1, \cdots, K\right\}$, and the messages for the $k$-th mUE and the dUE are denoted as $W_k$ and $W_\mathrm{d}$, respectively.
	
	\begin{figure}%	
		\subfloat[Message transmission at the AP-($k$-th mUE) link in the first phase of IeCRS.]{{\includegraphics[width=0.5\textwidth ]{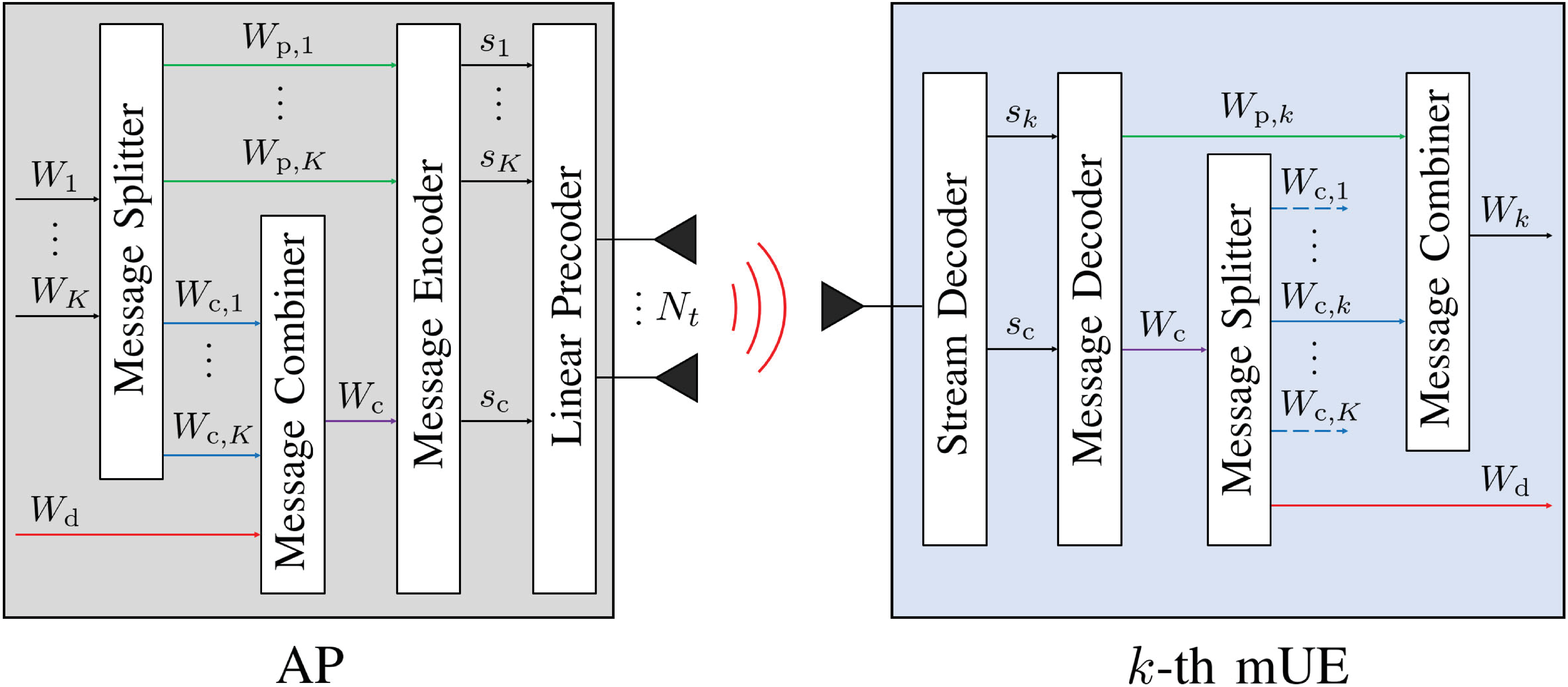}}}% 		
		\hfil
		\subfloat[Message transmission at the ($k$-th mUE)-dUE link in the second phase of IeCRS.]{{\includegraphics[width=0.5\textwidth ]{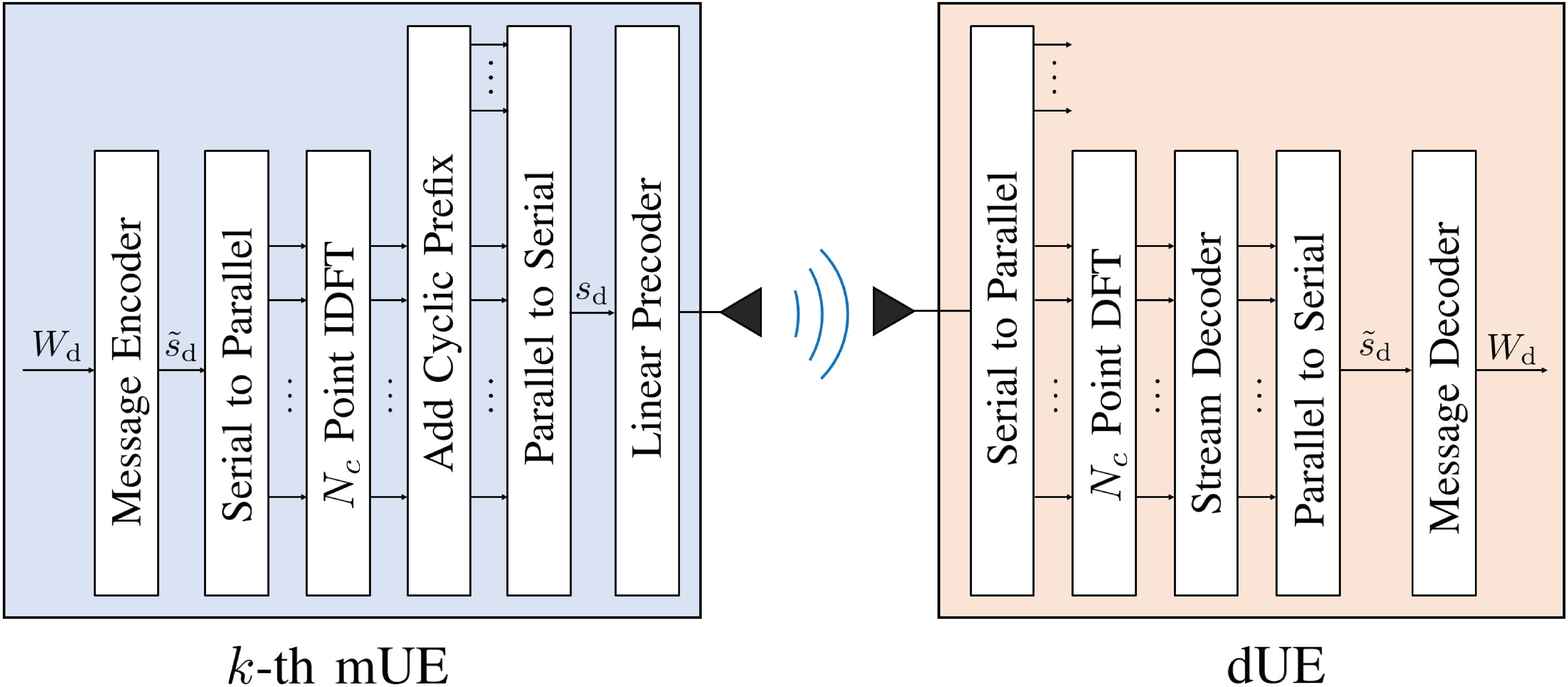} }}% 					
		\caption{Block diagram of the transmission framework of IeCRS.}% 		
		\label{fig:IDT}% 		
	\end{figure}

	\subsection{Transmission Framework of eCRS}
	In this subsection, we describe the transmission framework of eCRS.
	In the first phase, the AP splits the message $W_\mathrm{d}$ with respect to the subsets of the set $\mathcal{K}$, where each subset $\cS$ corresponds to the group of mUEs that decodes the message. The resulting messages are denoted as $\left\{ W_{\mathrm{d}}^{\cS}\right\}_{\cS \subset \cK}$, where $\cS \neq \varnothing$. 
	Similarly, the AP splits the message $W_k$ into private part $W_{\mathrm{p},k}$ and multiple common parts $\left\{ W_{\mathrm{c},k}^{\cS_i}\right\}_{\cS_i \subset \cK}$, where every $\cS_i$ includes the element $k$.
	Next, the AP jointly combines the messages with the same subset $\cS$ to generate the common message $W_{\textrm{c}}^{\cS}$, e.g., for $\cS = \{1,2\}$, the messages $W_{\mathrm{c},1}^{\{1,2\}}, W_{\mathrm{c},2}^{\{1,2\}},$ and $W_{\mathrm{d}}^{\{1,2\}}$ are combined into $W_{\textrm{c}}^{\{1,2\}}$.
	Then, the private part $W_{\mathrm{p},k}$ is encoded into the private stream $s_k$, and the common message $W_{\mathrm{c}}^{\cS}$ is encoded into the common stream $s_{\mathrm{c},\cS}$. 
	Since every mUE in the subset $\cS$ decodes the common stream $s_{\mathrm{c},\cS}$, every mUE must decode multiple common streams.
	While this is possible with multiple SIC layers, our scenario of interest considers the practical case of one SIC layer. 
	In result, we select the subsets so that each mUE only decodes a single common stream.
	The selection of subsets for eCRS with one SIC layer should satisfy the conditions given as
	\begin{align}
		\bigcap \left\{ \cS_i \right\}  = \varnothing, \ \bigcup \left\{ \cS_i \right\} = \cK, \ \cS_i \subset \cK,
	\end{align}
	where $\cS_i$ is the subset, which contains the group of mUEs that decodes the common stream $s_{\mathrm{c},\cS_i}$.
	
	While there can be various approaches to select the best subsets for eCRS with one SIC layer, we focus on two extreme cases, which are IeCRS and DeCRS.
	The subsets of IeCRS and DeCRS are denoted as $\cS_1 = \cK$ and $\left\{\cS_i\right\}_{i=1}^{K} = \left\{ \{1\}, \cdots , \{K\}\right\}$, respectively.
	Motivated by the 1-layer RSMA in \cite{RSMA1layer1,RSMA1layer2}, IeCRS explores the case where the common message is for all the mUEs.
	In contrast to IeCRS, DeCRS is the other extreme case with $K$ common messages, where every mUE has its own common message.
	
	\subsection{IeCRS}
	In IeCRS, the AP selects a single subset as $\cS_1 = \cK$, where the corresponding common message is intended for all mUEs. 
	Hence, the message $W_\mathrm{d}^{\cK}$ is equal to the message $W_\mathrm{d}$, and the message $W_k$ is split into the private part $ W_{\mathrm{p},k}$ and common part $W_{\mathrm{c},k}^{\cK}$.
	For simplicity, we neglect the index $\cK$ in IeCRS. 
	The overall message transmission and reception of IeCRS in the first phase is described in Fig. \ref{fig:IDT} (a).
	With the common stream $s_\mathrm{c}$ and private stream $s_k$, the AP linearly precodes the $\left(K+1\right)$ streams and transmits the signal given as
	\begin{align}
		\bx=\bff_\mathrm{c} s_\mathrm{c}+\sum_{k=1}^K \bff_k s_k,
	\end{align}
	where $\bff_\mathrm{c} \in \mathbb{C}^{N_t \times 1}$ and $\bff_k \in \mathbb{C}^{N_t \times 1}$ are the linear precoders for the common stream $s_\mathrm{c}$ and private stream $s_k$, respectively.
	
	The received signal at the $k$-th mUE during the coherence time block is given as
	\begin{align}
		y_k=\bh^\mathrm{H}_k \bF \bs +z_k, \label{eq: 2}
	\end{align}
	where $\bh_k \in \mathbb{C}^{N_t \times 1}$ is the downlink channel for the AP-($k$-th mUE) link, $\bF \in \mathbb{C}^{N_t \times (K+1)}$ is the precoder matrix defined as $\bF= \left[\bff_\mathrm{c}, \bff_1, \cdots, \bff_K \right]$ with the AP power constraint as $\mathrm{tr}(\bF \bF^{\mathrm{H}}) \leq P_{\mathrm{AP}}$, $\bs$ is the stream vector defined as $\bs=\left[s_\mathrm{c}, s_1, \cdots, s_K\right]^{\mathrm{T}}$ satisfying $\mathbb{E}\left\{ \bs \bs^\mathrm{H}\right\}=\bI_{(K+1)}$, and $z_k \sim \mathcal{CN}\left( 0, N_0 \right)$ is the additive white Gaussian noise (AWGN).
	Without loss of generality, we set the noise variance $N_0$ equal to 1 throughout the paper. 
	
	After receiving the signal, following the same procedure with existing RSMA techniques \cite{RSMA2,RSMA3,RSMA4}, the $k$-th mUE first decodes the common stream $s_\mathrm{c}$, which contains the common message $W_{\mathrm{c}}$. 
	The achievable rate for the common message $W_\mathrm{c}$ at the $k$-th mUE is given as
	\begin{align}
		R_{\mathrm{c}, k} = \log_2 \left( 1+ \frac{|\bh_k^{\mathrm{H}}\bff_\mathrm{c} |^2}{|\bh_k^{\mathrm{H}}\bff_k |^2+ I_k  +1}   \right),
	\end{align}
	where $I_k=\sum_{i \ne k} |\bh_k^{\mathrm{H}}\bff_i|^2$ is the interference from other private streams.
	Since every mUE needs to decode the common message $W_\mathrm{c}$, the corresponding rate of $W_\mathrm{c}$ should not exceed the achievable rate $R_{\mathrm{c}, k}$ respect to every $k \in \mathcal{K}$. The inequality that guarantees the mUEs to successfully decode the common message $W_\mathrm{c}$ is given as
	\begin{align} \label{eq: tradeoff 1}
		C_\mathrm{d}+\sum_{i=1}^K C_i  \le R_{\mathrm{c}, k}, \  \forall k \in \mathcal{K},
	\end{align}
	where $C_k$ is the rate for the common part $W_{\mathrm{c}, k}$, and $C_\mathrm{d}$ is the rate for the message $W_\mathrm{d}$ in the first phase.
	Next, the common stream $s_\mathrm{c}$ is cancelled out from the received signal $y_k$ using the SIC technique.
	Then, the $k$-th mUE decodes the private stream $s_k$, which contains the private part $W_{\mathrm{p},k}$. The achievable rate for the private part $W_{\mathrm{p},k}$ is given as
	\begin{align}
		R_{\mathrm{p},k} = \log_2 \left( 1+ \frac{|\bh_k^{\mathrm{H}}\bff_k |^2}{I_k + 1}   \right).
	\end{align}
	Finally, the $k$-th mUE extracts the common part $W_{\mathrm{c}, k}$ and message $W_\mathrm{d}$ from the common message $W_\mathrm{c}$ and then recombines the private part $W_{\mathrm{p}, k}$ and common part $W_{\mathrm{c}, k}$ into the message $W_k$.
	The achievable rate for the message $W_k$ is given as $R_k=R_{\mathrm{p}, k}+C_k$. 
	
	In the second phase of IeCRS, every mUE transmits the message $W_\mathrm{d}$ to the dUE. The overall process of the message transmission and reception in the second phase is described in Fig. \ref{fig:IDT} (b). 
	The $k$-th mUE encodes the message $W_\mathrm{d}$ into the stream ${s}_\mathrm{d}$ and transmits the precoded signal given as
	\begin{align}
		x_{\mathrm{d},k}= f_{\mathrm{d}, k} s_\mathrm{d},
	\end{align}
	where $f_{\mathrm{d}, k} \in \mathbb{C}$ is a linear precoder at the $k$-th mUE. 
	The received signal at the dUE in the $m$-th time slot is given as
	\begin{align}
		y_\mathrm{d}[m] & = \sum_{k=1}^{K} g_k x_{\mathrm{d}, k}[m-\tau_k] + z_\mathrm{d}[m] \label{eq: 7}\\
		&= \sum_{k=1}^{K} \bar{g}_k 	s_\mathrm{d}[m-\tau_k] + z_\mathrm{d}[m], \label{eq: 8}
	\end{align} 
	where $g_k \in \mathbb{C}$ and $\tau_k \in \mathbb{Z}$ are the channel gain and time delay for the ($k$-th mUE)-dUE link, respectively. The term $z_\mathrm{d}$ is the AWGN, and $\bar{g}_k$ is the effective channel gain of the ($k$-th mUE)-dUE link defined as $\bar{g}_k = g_k f_{\mathrm{d}, k}$. 
	
	Due to the large bandwidth and high sampling rate of THz communication, the signals from multiple mUEs arrive in different taps, making the THz cooperative channel in (\ref{eq: 8}) have the form of a frequency selective channel or single frequency network (SFN) system.
	However, different from the frequency selective channel and conventional SFN \cite{SFN}, the effective channel gain $\bar{g}_k$ can be controlled independently at the $k$-th mUE with the precoder $f_{\mathrm{d}, k}$ to obtain a higher achievable rate.
	We employ an orthogonal-frequency-division-multiplexing (OFDM) approach to make the channel into $N_c$ parallel flat-fading channels.
	After OFDM processing, the received signal at the $n$-th sub-carrier is given as
	\begin{align}
		\tilde{y}_{n}=\tilde{g}_{ n} \tilde{s}_{ n}+\tilde{z}_{n},
	\end{align}
	where $\tilde{s}_{n}$ is the frequency domain stream that satisfies $\mathbb{E}\left\{|\tilde{s}_{n}|^2 \right\}=1$, and  $\tilde{z}_{n}$ is the AWGN.
	The frequency domain channel $\tilde{g}_{n}$ is given as
	\begin{align} \label{eq: ofdm}
		\tilde{g}_{ n}=\sum_{k=1}^{K} \bar{g}_k \exp{\left(\frac{-j2\pi \tau_k n}{N_c}\right)}.
	\end{align}
	We observe that the $N_c$ sub-carrier channels depend on the variable $\bar{g}_k$. By controlling $\bar{g}_k$, i.e., $f_{\mathrm{d},k}$, the quality of the sub-carrier channels can improve.
	By decoding and recombining the messages from $N_c$ received signals, the achievable rate for the message $W_\mathrm{d}$ in the second phase is given as
	\begin{align}
		R_\mathrm{d}^{(2)}&=\frac{1}{N_c+L} \sum_{n=1}^{N_c} \log_2\left( {1+|\tilde{g}_{n}|^2} \right) \nonumber \\
		& \approx \frac{1}{N_c} \sum_{n=1}^{N_c} \log_2\left( {1+|\tilde{g}_{n}|^2} \right) \label{eq: 11},
	\end{align}
	where $L$ is the length of the cyclic prefix. The approximation in (\ref{eq: 11}) is valid since we can set the number of sub-carriers $N_c$ much larger than $L$.
	Finally, the achievable rate for the message $W_\mathrm{d}$ at the dUE is given as
	\begin{align}
		R_\mathrm{d}=\min \left\{C_\mathrm{d}, R_\mathrm{d}^{(2)} \right\},
	\end{align}
	which guarantees the message $W_\mathrm{d}$ to be successfully decoded in both two phases. We design the precoders and message split for IeCRS in Section \ref{sec:IDT}.
	
	\begin{figure}%	
		\subfloat[Message transmission at the AP-($k$-th mUE) link in the first phase of DeCRS.]{{\includegraphics[width=0.5\textwidth ]{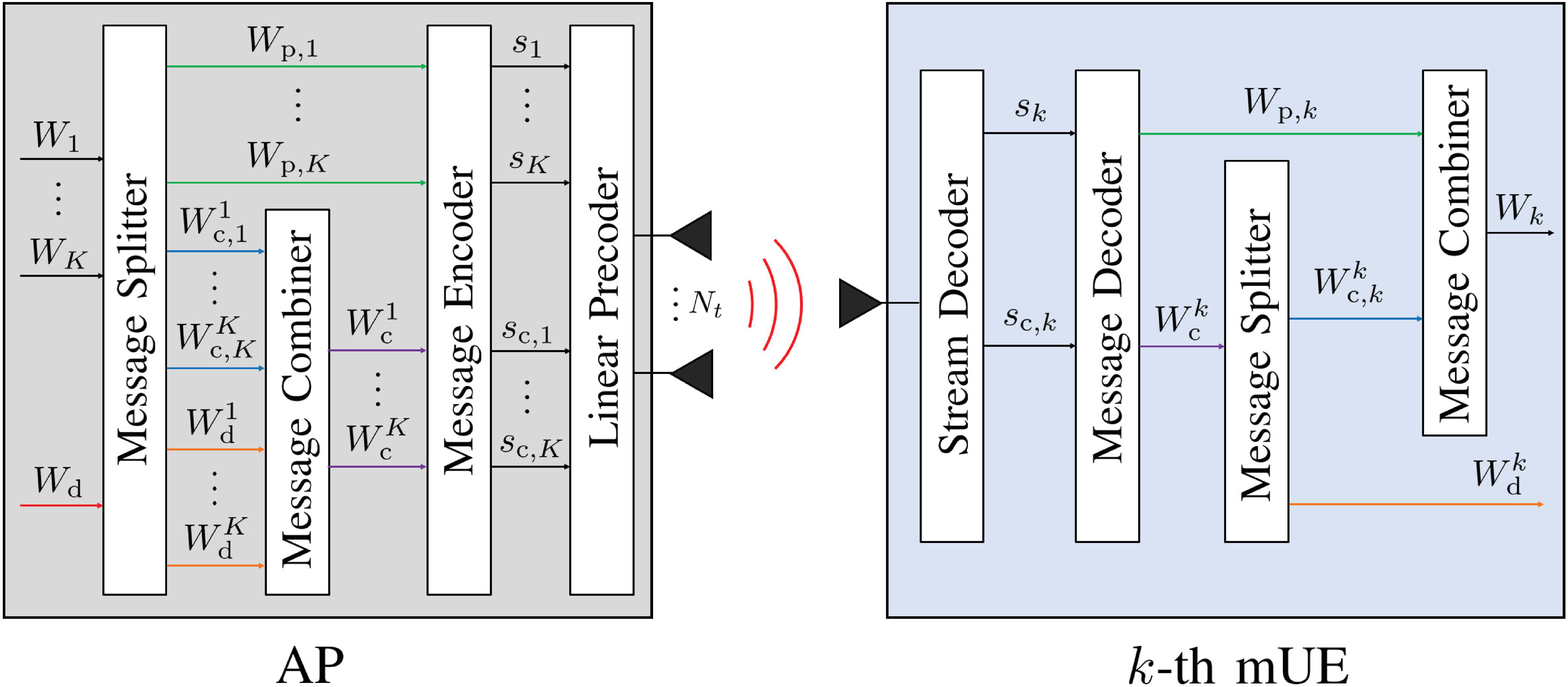} }}% 		
		\hfil
		\subfloat[Message transmission at the mUE-dUE links in the second phase of DeCRS.]{{\includegraphics[width=0.5\textwidth ]{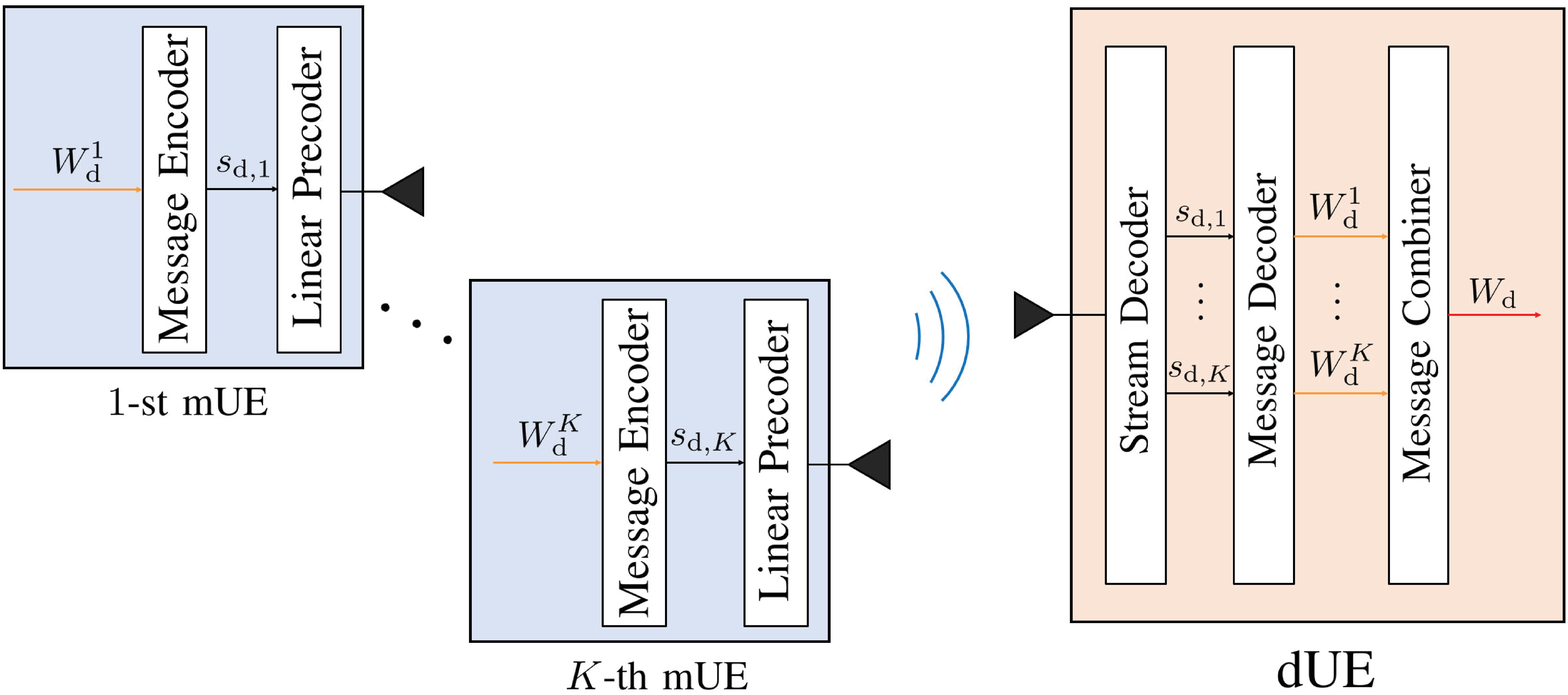} }}% 					
		\caption{Block diagram of the transmission framework of DeCRS.}% 		
		\label{fig:DDT}% 		
	\end{figure}
	
	\subsection{DeCRS}
	For DeCRS, the AP selects the subsets as $\left\{\cS_i\right\}_{i=1}^{{K}} = \left\{ \{1\}, \cdots , \{K\}\right\}$. 
	The overall message transmission and reception of DeCRS in the first phase is described in Fig. \ref{fig:DDT} (a). With $K$ common streams and $K$ private streams, the AP linearly precodes $2K$ streams and transmits the signal given as
	\begin{align}
		\bx = \sum_{k=1}^{K} \bff_{\mathrm{c}, k} s_{\mathrm{c}, k} + \sum_{k=1}^{K} \bff_k s_k, 
	\end{align}
	where ${\bff}_{\mathrm{c}, k} \in \mathbb{C}^{N_t \times 1}$ and $\bff_k \in \mathbb{C}^{N_t \times 1}$ are the linear precoders for the common stream $s_{\mathrm{c}, k}$ and private stream $s_k$, respectively.
	
	The received signal at the $k$-th mUE during the coherence time block is given as
	\begin{align}
		y_k=\bh^\mathrm{H}_k \bF\bs+z_k, \label{eq: 14}
	\end{align}
	where $\bh_k$ is the downlink channel defined in IeCRS, $\bF \in \mathbb{C}^{N_t \times 2K}$ is the precoder matrix defined as $\bF=\left[\bff_{\mathrm{c}, 1}, \cdots\bff_{\mathrm{c}, K}, \bff_1, \cdots, \bff_K \right]$, $\bs$ is the stream vector defined as $\bs= \left[s_{\mathrm{c}, 1}, \cdots,  s_{\mathrm{c}, K}, s_1, \cdots, s_K\right]^{\mathrm{T}}$ satisfying $\mathbb{E}\left\{ \bs \bs^\mathrm{H}\right\}=\bI_{2K}$, and $z_k$ is the AWGN. After receiving the signal, the $k$-th mUE first decodes the common stream $s_{\mathrm{c}, k}$, which contains the common message $W_{\mathrm{c}}^k$. The achievable rate for the common message $W_{\mathrm{c}}^k$ is given as
	\begin{align}
		R_{\mathrm{c},k} = \log_2 \left( 1+ \frac{|\bh_k^{\mathrm{H}}\bff_{\mathrm{c},k} |^2}{|\bh_k^{\mathrm{H}}\bff_{k} |^2+I_k +I_{\mathrm{c},k}+1}\right),	
	\end{align}
	where $I_{\mathrm{c}, k} = \sum_{i \ne k} |\bh_k^{\mathrm{H}}\bff_{\mathrm{c} ,i}|^2$ and $I_k=\sum_{i \ne k} |\bh_k^{\mathrm{H}}\bff_i|^2$ are the interference from other common and private streams, respectively. 
	The inequality that guarantees the $k$-th mUE to successfully decode the common message $W_{\mathrm{c}}^k$ is given as 
	\begin{align} \label{eq: tradeoff 3}
		C_{\mathrm{d}, k} + C_k \le R_{\mathrm{c}, k},
	\end{align}
	where $C_k$ is the rate for the common part $W_{\mathrm{c}, k}^k$, and $C_{\mathrm{d}, k}$ is the rate for the message $W_{\mathrm{d}}^k$ in the first phase. 
	Next, the common stream $s_{\mathrm{c},k}$ is cancelled out from the received signal $y_k$ using the SIC technique. Then, the $k$-th mUE decodes the private stream $s_{k}$, which contains the private part $W_{\mathrm{p},k}$. The achievable rate for the private part $W_{\mathrm{p},k}$ is given as
	\begin{align}
		R_{\mathrm{p},k} = \log_2 \left( 1+ \frac{|\bh_k^{\mathrm{H}}\bff_{k} |^2}{I_k +I_{\mathrm{c}, k} +1}   \right).
	\end{align}
	Finally, the $k$-th mUE extracts the common part $W_{\mathrm{c}, k}^k$ and message $W_{\mathrm{d}}^k$ from the common message $W_{\mathrm{c}}^k$ and then recombines the private part $W_{\mathrm{p}, k}$ and common part $W_{\mathrm{c}, k}^k$ into the message $W_k$. The achievable rate for the message $W_k$ is given as $R_k=R_{\mathrm{p}, k}+C_k$ as in IeCRS.
	
	In the second phase of DeCRS, the $k$-th mUE transmits the distinct message $W_{\mathrm{d}}^k$ to the dUE. 
	The overall process of the message transmission and reception in the second phase is described in Fig. \ref{fig:DDT} (b).
	The $k$-th mUE encodes the message $W_{\mathrm{d}}^k$ into the stream $s_{\mathrm{d}, k}$ and transmits the precoded stream given as
	\begin{align}
		x_{\mathrm{d},k}=f_{\mathrm{d}, k} s_{\mathrm{d},k},
	\end{align}
	where $f_{\mathrm{d}, k} \in \mathbb{C}$ is a linear precoder, and $s_{\mathrm{d}, k}$ satisfies $\mathbb{E}\left\{|s_{\mathrm{d}, k}|^2 \right\}=1$.
	The received signal at the dUE in the $m$-th time slot is given as
	\begin{align}
		y_\mathrm{d}[m] &= \sum_{k=1}^{K} g_{k}  x_{\mathrm{d},k}[m-\tau_k] + z_\mathrm{d}[m] \label{eq: 19} \\
		&= \sum_{k=1}^{K} \bar{g}_{k}  s_{\mathrm{d},k}[m-\tau_k] + z_\mathrm{d}[m] \label{eq: 20},
	\end{align}
	where $g_k$, $\tau_k$, $z_\mathrm{d}$, and $\bar{g}_k$ are defined in IeCRS.
	While the THz cooperative channel in IeCRS has the form of the frequency selective channel due to the identical message $W_\mathrm{d}$, the THz cooperative channel in DeCRS has the form of the uplink multiple access channel, which treats the streams from other mUEs as interference.
	Assuming a conservative scenario that the interference from other mUEs always exists in every time slot, the achievable rate for the message $W_{\mathrm{d}}^k$ in the second phase is given as
	\begin{align}
		R_{\mathrm{d}, k}^{(2)}=\log_2{\left(1+ \frac{|\bar{g}_k|^2}{J_k+1} \right)},
	\end{align}
	where $J_k= \sum_{i\ne k} |\bar{g}_i|^2$ is the interference from other mUEs.
	Finally, the achievable rate for the message $W_\mathrm{d}$ at the dUE is given as
	\begin{align}
		R_\mathrm{d}=\sum_{k=1}^K \min \left\{C_{\mathrm{d}, k}, R_{\mathrm{d}, k}^{(2)}  \right\},
	\end{align}
	which is the summation of the achievable rate for the message $W_{\mathrm{d}}^k$ of every $k \in \mathcal{K}$. The minimum function is used to guarantee the successful decoding of the message $W_{\mathrm{d}}^k$ in both the AP-($k$-th mUE) and ($k$-th mUE)-dUE links.
	We design the precoders and message split for DeCRS in Section \ref{sec:DDT}.
	
	\textit{Remark 1:}
	In IeCRS, all the mUEs must successfully decode the common message in the first phase. 
	If there are mUEs with weak channel conditions, these mUEs may bound the performance for the first phase.
	In contrast, since multiple mUEs transmit the same stream in the second phase, the mUEs collaborate to strengthen the signal in the second phase. 
	In result, the common message for multiple mUEs can be detrimental for the first phase, but can be beneficial for the second phase.
	DeCRS experiences this tradeoff opposite to IeCRS. 
	
	\textit{Remark 2:}
	For IeCRS, the common message affects all the mUEs, since the common stream contains the messages for all $K$ mUEs.
	In contrast, the $k$-th common stream for DeCRS only contains the messages for the dUE and $k$-th mUE, thus, it only affects a single mUE. 
	In result, we can expect that the common message will have more impact on IeCRS than DeCRS.

	\subsection{Channel Model}
	A uniform planar array (UPA) with $N_t$ antennas is implemented at the AP.
	Hence, the channel for the AP-($k$-th mUE) link is given as 
	\begin{align}
		& \bh_k=\sqrt{\beta_0 \left(d_k^{\mathrm{AP-mUE}}\right)^{-\alpha}} \\ \nonumber
		& \times \left[1, \cdots, \exp{\left(j  (N_v-1)\pi \sin \phi_k\right)}\right]^\mathrm{T} \\  \nonumber
		& \otimes \left[1, \cdots, \exp{\left(j (N_h-1) \pi\cos \phi_k \cos \varphi_k\right) }\right]^\mathrm{T},
	\end{align}
	where $\beta_0$ is the path-loss at the unit distance, $d_k^{\mathrm{AP-mUE}}$ is the distance between the AP and $k$-th mUE, and $\alpha$ is the path-loss exponent. The array response vector of the channel is expressed with $\phi_k$ and $\varphi_k$, which are the vertical and horizontal angles between the AP and $k$-th mUE, $N_v$ is the number of vertical antenna elements, and $N_h$ is the number of horizontal antenna elements.
	The channel for the ($k$-th mUE)-dUE link is given as   
	\begin{align}
		g_{k}=\sqrt{\beta_0 \left(d_k^{\mathrm{mUE-dUE}}\right)^{-\alpha}} \exp{\left(j \theta_k\right)},
	\end{align}
	where $d_k^{\mathrm{mUE-dUE}}$ is the distance between the $k$-th mUE and dUE, and $\theta_k$ is the phase of the channel.
	The corresponding time delay is modeled as
	\begin{align}
		\tau_k=\mathrm{round}\left(\frac{f_s d_k^{\mathrm{mUE-dUE}}}{c} \right),
	\end{align}
	where $\mathrm{round}\left(\cdot\right)$ is the round function, $f_s$ is the sampling frequency, and $c$ is the speed of light.

	\section{Identical extraction-based CRS} \label{sec:IDT}
	In this section, we formulate the overall problem for IeCRS and solve it through convex optimization techniques. Also, to compensate for the high complexity of the optimization problem, we derive a transmission strategy with minimal complexity.
	We assume perfect CSI while deriving the precoder and message split. The proposed channel estimation technique is described in Section \ref{sec: est}.
	
	\subsection{Problem Formulation}
	To enlarge the coverage of the system, we focus on maximizing the minimum achievable rate of the mUEs and dUE. The overall problem can be formulated as 
	\begin{align}
		\mathrm{(P1):} \max_{\bF, \bc, \bar{\bg}} &\min \left\{ R_1, \cdots, R_K, R_{\mathrm{d}}\right\} \notag \\
		\text{s.t.} \  &R_k = R_{\mathrm{p},k} + C_k, \ \forall k \in \cK, \tag{1-a} \label{eq: med rate}\\
		&C_{\mathrm{d}} + \sum_{i=1}^K C_i \leq R_{\mathrm{c},k}, \ \forall k \in \cK, \tag{1-b} \label{eq: common rate}\\
		&R_{\mathrm{d}} = \min \left\{C_{\mathrm{d}},R_{\mathrm{d}}^{(2)}\right\}, \tag{1-c} \label{eq: dUE rate} \\
		&\mathrm{tr}(\bF \bF^{\mathrm{H}}) \leq P_{\mathrm{AP}}, \tag{1-d} \label{eq: AP power}\\
		&\lvert \bar{g}_k \rvert ^2 \leq \lvert g_k \rvert^2 P_{k}, \ \forall k \in \cK, \tag{1-e} \label{eq: medium power}
	\end{align}
	where \eqref{eq: AP power} is the power constraint of the AP, and \eqref{eq: medium power} is the power constrant of the $k$-th mUE with power $P_{k}$. The optimization variables $\bc$ and $\bar{\bg}$ are defined as $\bc = \left[ C_1, \cdots, C_K, C_{\mathrm{d}}\right]^{\mathrm{T}}$ and $\bar{\bg} = \left[ \bar{g}_1, \cdots, \bar{g}_K \right]^{\mathrm{T}}$, respectively. Since $\mathrm{(P1)}$ is non-convex, the problem cannot be directly solved. To resolve this issue, we split the problem into two separate problems, each maximizing $C_{\mathrm{d}}$ and $R_{\mathrm{d}}^{(2)}$, and then compute $R_{\mathrm{d}}$ as in \eqref{eq: dUE rate}. The problem for the first phase is formulated as
	\begin{align}
		\mathrm{(P2):} \max_{\bF, \bc} &\min \left\{ R_1, \cdots, R_K, C_{\mathrm{d}}\right\} \notag \\
		\text{s.t.} \  &\eqref{eq: med rate}, \eqref{eq: common rate}, \eqref{eq: AP power}, \notag
	\end{align}
	and the problem for the second phase is formulated as 
	\begin{align}
		\mathrm{(P3):} \max_{\bar{\bg}}  \ &R_{\mathrm{d}}^{(2)} \notag \\
		\text{s.t.} \  &\eqref{eq: medium power}.\notag
	\end{align}
	In the following subsection, we derive the solutions for the problems (P2) and (P3).
	
	\subsection{Proposed Technique}
	We solve (P2) through a weighted minimum mean square error (WMMSE) approach and (P3) through a successive convex approximation (SCA) approach. Due to the high complexity of the SCA approach, we also derive a closed form solution for (P3) under the low SNR assumption.
	
	From (P2), we observe that the problem is similar to the NOUM transmission scenario with RSMA \cite{RSMA4}, where we can interpret $\left\{ W_1 , \cdots, W_K \right\}$ as the unicast messages and $W_{\mathrm{d}}$ as the multicast message. Thus, we adopt the WMMSE approach to the problem (P2) with some adjustments to match our framework. 
	
	The $k$-th mUE first decodes $W_\mathrm{c}$ and removes the corresponding stream $s_\mathrm{c}$ from the received signal $y_k$. Next, the $k$-th mUE decodes the private part $W_{\mathrm{p},k}$. Through the equalizers $w_{\mathrm{c},k}$ and $w_{\mathrm{p},k}$, the $k$-th mUE estimates the streams $\hat{s}_{\mathrm{c},k} = w_{\mathrm{c},k} y_k$ and $\hat{s}_{k} = w_{\mathrm{p},k} \left( y_k - \hat{s}_{\mathrm{c},k} \right)$, respectively. The mean square errors (MSEs) of the $k$-th mUE can be expressed as
	\begin{align}
		\varepsilon_{\mathrm{c},k} &= |w_{\mathrm{c},k}|^2 T_{\mathrm{c},k} - 2 \Re \left\{ w_{\mathrm{c},k} \bh_k^{\mathrm{H}}\bff_{\mathrm{c}}\right\} +1,\\
		\varepsilon_{\mathrm{p},k} &= |w_{\mathrm{p},k}|^2 T_{\mathrm{p},k} - 2 \Re \left\{ w_{\mathrm{p},k} \bh_k^{\mathrm{H}}\bff_k\right\} +1,
	\end{align} 
	where $T_{\mathrm{c},k} = |\bh^{\mathrm{H}}_k \bff_{\mathrm{c}}|^2 + \sum_{i =1}^K |\bh^{\mathrm{H}}_k \bff_i|^2+1$ and $T_{\mathrm{p},k} = T_{\mathrm{c},k} - |\bh^{\mathrm{H}}_k \bff_{\mathrm{c}}|^2$. To obtain the minimum MSE (MMSE) equalizers, we compute $\frac{d\varepsilon_{\mathrm{c},k}}{dw_{\mathrm{c},k}} = 0$ and $\frac{d\varepsilon_{\mathrm{p},k}}{dw_{\mathrm{p},k}} = 0$ to gain
	\begin{align}
		w_{\mathrm{c},k}^{\mathrm{MMSE}} &= \bff_{\mathrm{c}}^{\mathrm{H}} \bh_k T_{\mathrm{c},k}^{-1},\\
		w_{\mathrm{p},k}^{\mathrm{MMSE}} &= \bff_k^{\mathrm{H}} \bh_k T_{\mathrm{p},k}^{-1}.
	\end{align}
	By substituting the MMSE equalizers, the MSEs can be expressed as $\varepsilon_{\mathrm{c},k} = \left( T_{\mathrm{c},k} - |\bh^{\mathrm{H}}_k\bff_{\mathrm{c}}|^2\right)/T_{\mathrm{c},k}$ and $\varepsilon_{\mathrm{p},k} = \left( T_{\mathrm{p},k} - |\bh^{\mathrm{H}}_k\bff_{k}|^2\right)/T_{\mathrm{p},k}$. Then, the achievable rates can be expressed in alternate forms as $R_{\mathrm{c},k} = -\log_2 \left( \varepsilon_{\mathrm{c},k}\right)$ and $R_{\mathrm{p},k} = -\log_2 \left( \varepsilon_{\mathrm{p},k}\right)$. Furthermore, we define the weighted MSEs (WMSEs) as $\xi_{\mathrm{c},k} = \mu_{\mathrm{c},k} \varepsilon_{\mathrm{c},k} - \ln (\mu_{\mathrm{c},k})$ and $\xi_{\mathrm{p},k} = \mu_{\mathrm{p},k} \varepsilon_{\mathrm{p},k} - \ln (\mu_{\mathrm{p},k})$, where $\mu_{\mathrm{c},k}$ and $\mu_{\mathrm{p},k}$ are the weight variables. In result, the achievable rates can be expressed through the WMMSEs as
	\begin{align}
		\xi^{\mathrm{MMSE}}_{\mathrm{c},k} &= \min_{w_{\mathrm{c},k}, \mu_{\mathrm{c},k}} \xi_{\mathrm{c},k}  = 1 - R_{\mathrm{c},k} \ln (2),\\
		\xi^{\mathrm{MMSE}}_{\mathrm{p},k} &= \min_{w_{\mathrm{p},k}, \mu_{\mathrm{p},k}} \xi_{\mathrm{p},k}  = 1 - R_{\mathrm{p},k} \ln (2),
	\end{align}
	where the optimal weights are derived as $\mu^{\mathrm{WMMSE}}_{\mathrm{c},k} = \left(\varepsilon_{\mathrm{c},k}^{\mathrm{MMSE}}\right)^{-1}$ and $\mu^{\mathrm{WMMSE}}_{\mathrm{p},k} = \left(\varepsilon_{\mathrm{p},k}^{\mathrm{MMSE}}\right)^{-1}$.
	
	Through the relationship between the achievable rate and WMMSE, the problem (P2) is transformed by interchanging the achievable rates into the WMMSE forms as
\begin{align}
\mathrm{(P2}&\mathrm{.1):} \min_{\cV_{\mathrm{IeCRS}}} \  t_0  \notag \\
& \ \text{s.t.} \  X_k + (\xi_{\mathrm{p},k} - 1) / \ln(2) \leq t_0, \ \forall k \in \cK, \tag{2.1-a}\\
&\qquad (\xi_{\mathrm{c},k}-1) / \ln (2) \leq X_{\mathrm{d}} + \sum_{i=1}^K X_i , \ \forall k \in \cK, \tag{2.1-b}\\
&\qquad X_{\mathrm{d}} \leq t_0, \tag{2.1-c} \\
&\qquad \eqref{eq: AP power}, \notag
\end{align}
	where $\bx = [X_1, \cdots, X_K, X_{\mathrm{d}}]^{\mathrm{T}} =  -\bc$, and $t_0$ is a slack variable to express $\min \left\{ R_1, \cdots, R_K, C_{\mathrm{d}}\right\}$. The variables of the optimization problem are defined as the set $\cV_{\mathrm{IeCRS}} = \left\{ \bF, \bx, \bw, \bmu, t_0 \right\}$ with $\bw = \left[ w_{\mathrm{p},1} , \cdots, w_{\mathrm{p},K}, w_{\mathrm{c},1}, \cdots, w_{\mathrm{c},K}\right]^{\mathrm{T}}$ and $\bmu = \left[ \mu_{\mathrm{p},1} , \cdots, \mu_{\mathrm{p},K}, \mu_{\mathrm{c},1}, \cdots, \mu_{\mathrm{c},K}\right]^{\mathrm{T}}$. By solving (P2.1), the resulting minimum rate of the system is $\min \left\{ R_1, \cdots, R_K, C_{\mathrm{d}}\right\} = -t_0$. While (P2.1) is non-convex in general, the problem is convex with respect to each variable set $\left\{ \bF, \bx, t_0 \right\}$ and $\left\{ \bw, \bmu \right\}$ by fixing the other variable set.
	Thus, the problem can be efficiently solved iteratively through alternating optimization (AO). 
	
	To solve the non-convex problem (P3), we introduce slack variables to adopt the SCA approach. The problem can be transformed as
\begin{align}
\mathrm{(P3.1):} \max_{\bar{\bg}, \ba, \bb, \bu} \  &\frac{1}{N_c}\sum_{n=1}^{N_c}  \log_2 (1 + u_n) \notag \\
\text{s.t.} \  &u_n \leq a_n^2 + b_n^2, \ \forall n \in \cN, \tag{3.1-a} \label{eq: slack}\\
&\ba = \Re \{\tilde{\bg}\}, \bb = \Im \{\tilde{\bg}\}, \  \forall n \in \cN, \tag{3.1-b} \label{eq: for g}\\
&\eqref{eq: medium power}, \notag
\end{align}
	where $\tilde{\bg} = [\tilde{g}_1, \cdots ,\tilde{g}_{N_c}]^{\mathrm{T}}$.
	The slack variables are defined as $\ba = \left[a_1 , \cdots, a_{N_c}\right]^{\mathrm{T}},\bb = \left[b_1 , \cdots, b_{N_c}\right]^{\mathrm{T}}$, and $\bu = \left[u_1 , \cdots, u_{N_c}\right]^{\mathrm{T}}$. The sub-carriers are indexed by a set $\cN =\left\{1, \cdots, N_c\right\}$. To maximize the objective, $u_n$ will be maximized until \eqref{eq: slack} is met with equality, resulting in $u_n =  \lvert \tilde{g}_n \rvert^2$. Thus, the problems (P3) and (P3.1) are indeed equivalent. While (P3.1) is non-convex due to \eqref{eq: slack}, the right-hand side of the constraint is convex, motivating us to use the SCA approach. In result, the surrogate optimization problem for the $\ell$-th iteration will be
\begin{align}
\mathrm{(P3.2):} \max_{\bar{\bg}, \ba, \bb, \bu} \  &R_{\mathrm{d}}^{(2)} = \frac{1}{N_c}\sum_{n=1}^{N_c}  \log_2 (1 + u_n) \notag \\
\text{s.t.} \  &u_n \leq \tilde{a}^{(\ell)}_n + \tilde{b}^{(\ell)}_n , \ \forall n \in \cN, \tag{3.2-a}\\
&\eqref{eq: for g}, \eqref{eq: medium power}, \notag
\end{align}
	where $\tilde{a}^{(\ell)}_n$ and $\tilde{b}^{(\ell)}_n$ are the first-order derivatives of the right-hand side of \eqref{eq: slack} defined as
	\begin{align}
		\tilde{a}^{(\ell)}_n &= \left(a_n^{(\ell)}\right)^2 + 2 \left\{ a_n^{(\ell)} \left(a_n-a_n^{(\ell)}\right)\right\}, \label{eq: bound a}\\
		\tilde{b}^{(\ell)}_n &= \left(b_n^{(\ell)}\right)^2 + 2 \left\{ b_n^{(\ell)} \left(b_n-b_n^{(\ell)} \right)\right\}, \label{eq: bound b}
	\end{align}
	with the variables $a_n^{(\ell)}$ and $b_n^{(\ell)}$ defined as the local points for the $\ell$-th iteration. By iteratively solving (P3.2) and updating $a_n^{(\ell)}$ and $b_n^{(\ell)}$ with the solutions from the $(\ell-1)$-th iteration, the solution will converge to a local optimum of (P3.1) \cite{SCA}. By taking the minimum of $C_{\mathrm{d}}$ and $R_{\mathrm{d}}^{(2)}$, we obtain the achievable rate $R_{\mathrm{d}}$. The overall algorithm for IeCRS is shown in Algorithm 1.
	
	\begin{algorithm}[t]
		\begin{algorithmic} [1]
			\caption{Pseudo code for minimum rate maximization in IeCRS}
			\State \textbf{Initialization:} Set $\ell_1 = 0, \bw^{(\ell_1)},$ and $\bmu^{(\ell_1)}$.
			\Repeat
			\State Set $\ell_1 = \ell_1 +1$.
			\State Solve (P2.1) with fixed $\bw$ and $\bmu$.
			\State Update $\bw^{(\ell_1)}$ and $\bmu^{(\ell_1)}$.
			\Until \ $t_0$ decreases by a fraction below a predefined threshold.
			\State $C_{\mathrm{d}} = -t_0$.
			\State Set $\ell_2 = 0, \ba^{(\ell_2)},$ and $\bb^{(\ell_2)}$.
			\Repeat
			\State Set $\ell_2 = \ell_2 +1$.
			\State Solve (P3.2).
			\State Update $\ba^{(\ell_2)}$ and $\bb^{(\ell_2)}$ as the solutions of (P3.2).
			\Until \ $R_{\mathrm{d}}^{(2)}$ increases by a fraction below a predefined threshold.
			\State $R_{\mathrm{d}} = \min \left\{ C_{\mathrm{d}},R_{\mathrm{d}}^{(2)}\right\}$.
		\end{algorithmic}
	\end{algorithm} 
	
	\subsection{Low Complexity Approach}
	While the convex optimization approach is effective in performance, the complexity of iteratively solving (P3.2) is quite high, where the slack variables have the size of the number of OFDM sub-carriers. Due to the excessive use of bandwidth in THz frequencies, the number of sub-carriers are expected to be huge. To compensate for this factor, we propose a low complexity approach to solve (P3).
	
	Due to the low transmit power of the mUEs, we assume that the dUE operates in the low SNR regime. The achievable rate $R_{\mathrm{d}}^{(2)}$ can then be simplified as
	\begin{align}
		\sum_{n=1}^{N_c}  \log_2 (1 + \lvert \tilde{g}_n \rvert^2) \approx \sum_{n=1}^{N_c}   \lvert \tilde{g}_n \rvert^2 / \ln(2).
	\end{align}
	We also express $\tilde{g}_n$ in an alternate form from \eqref{eq: ofdm} as $\tilde{g}_n = \bOmega_{n}^{\mathrm{H}} \bar{\bg}$, where $\bOmega_n = \left[ \Omega_{n,1}, \cdots, \Omega_{n,K} \right]^{\mathrm{T}}$, with $\Omega_{n,k} = \exp\left(\frac{j2\pi \tau_k n}{N_c}\right)$. 
	After neglecting $\ln(2)$ for simplicity, the achievable rate can be simplified as
	\begin{align}
		\sum_{n=1}^{N_c}  \lvert \tilde{g}_n \rvert^2 &= \sum_{n=1}^{N_c} \bar{\bg}^{\mathrm{H}} \bOmega_{n} \bOmega_{n}^{\mathrm{H}} \bar{\bg} \\
		&= \bar{\bg}^{\mathrm{H}} \bOmega \bar{\bg}, \notag
	\end{align}
	where $\bOmega = \sum_{n=1}^{N_c} \bOmega_n \bOmega_n^{\mathrm{H}}$. Before proposing our low complexity approach, we first state the following lemma.
	
	\begin{lemma} \label{lemma 1}
		For any number of sub-carriers $N_c$,
		\begin{align}
			\bOmega = N_c \bI_K.
		\end{align}
	\end{lemma}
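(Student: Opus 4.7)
The plan is to prove the claim entrywise, by computing an explicit form of the $(k,k')$ entry of $\bOmega$ and then evaluating it as a geometric/roots-of-unity sum. First I would write
\begin{align}
[\bOmega]_{k,k'} = \sum_{n=1}^{N_c} [\bOmega_n]_k [\bOmega_n^{\mathrm{H}}]_{k'} = \sum_{n=1}^{N_c} \Omega_{n,k} \Omega_{n,k'}^{*} = \sum_{n=1}^{N_c} \exp\!\left(\frac{j 2\pi (\tau_k - \tau_{k'}) n}{N_c}\right),
\end{align}
so the entire computation reduces to evaluating a sum of $N_c$ equally spaced samples of a complex exponential whose frequency index is $\Delta_{k,k'} := \tau_k - \tau_{k'} \in \mathbb{Z}$.

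Next I would split into the diagonal and off-diagonal cases. For $k=k'$, $\Delta_{k,k}=0$ and every summand equals $1$, so $[\bOmega]_{k,k}=N_c$. For $k \neq k'$, I would invoke the standard identity that for any integer $m$ not a multiple of $N_c$, the geometric sum $\sum_{n=1}^{N_c} e^{j 2\pi m n / N_c}$ vanishes, since it is a full period of $N_c$-th roots of unity. Combining the two cases yields $\bOmega = N_c \bI_K$.

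The main subtlety, and really the only place the argument can go wrong, is the off-diagonal case: one needs $\Delta_{k,k'} \not\equiv 0 \pmod{N_c}$ for $k \neq k'$. I would justify this by appealing to the OFDM design assumption already used in the paper, namely that $N_c$ is chosen much larger than the cyclic prefix length $L$, and that the delays $\tau_k$ derived from the geometry in the channel model section are bounded by the channel's delay spread, which is in turn bounded by $L$. Under this regime $|\Delta_{k,k'}| < N_c$ automatically, so the roots-of-unity cancellation applies and the lemma follows. This is the only step that requires a real argument; everything else is a direct calculation.
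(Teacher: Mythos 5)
Your proof is essentially the paper's proof: both reduce to the entrywise sum $\sum_{n=1}^{N_c}\exp\!\left(j2\pi(\tau_a-\tau_b)n/N_c\right)$, get $N_c$ on the diagonal, and kill the off-diagonal terms by the full-period roots-of-unity (geometric series) cancellation, using the fact that $\tau_a-\tau_b$ is an integer so $\omega_{a,b}^{N_c}=1$. The one place where you are actually more explicit than the paper --- the requirement $\tau_a-\tau_b\not\equiv 0 \pmod{N_c}$ --- is also where your justification falls slightly short: bounding $|\tau_a-\tau_b|<N_c$ via the delay spread does not exclude $\tau_a=\tau_b$ for two distinct mUEs, in which case the off-diagonal entry equals $N_c$ and the lemma fails (the paper's own closed-form $\omega_{a,b}\frac{1-\omega_{a,b}^{N_c}}{1-\omega_{a,b}}$ likewise divides by zero there; the paper quietly relegates this coinciding-delay case to a footnote rather than handling it in the proof). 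So: same route, same implicit assumption of pairwise-distinct delays; just state that assumption explicitly rather than deriving it from $N_c\gg L$, which only gives you the magnitude bound.
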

	\begin{proof}
		For $a \ne b , \{a, b\} \in \cK $, the $(a,b)$-th element of $\bOmega$ can be derived as
		\begin{align}
			\Omega[a,b] &= \sum_{n=1}^{N_c} \exp \left( \frac{j2\pi (\tau_a - \tau_b) n }{N_c}\right) \\
			&= \sum_{n=1}^{N_c} \omega_{a,b}^n = \omega_{a,b} \frac{1- \omega_{a,b}^{N_c}}{1-\omega_{a,b}} \notag\\
			&= 0, \notag
		\end{align}
		where $\omega_{a,b} = \exp \left( \frac{j 2 \pi (\tau_a - \tau_b)}{N_c}\right)$. The $a$-th diagonal element of $\bOmega$ can be derived as $N_c$ in the same way, which finishes the proof.
	\end{proof}
	Using Lemma 1, (P3) can be expressed as
\begin{align}
\mathrm{(P3.3):} \max_{\bar{\bg}} &\frac{1}{N_c \ln(2)} \norm{\bar{\bg}}^2 \notag \\
\text{s.t.} \ &\eqref{eq: medium power}. \notag
\end{align}
	From (P3.3), we observe that the optimal communication strategy in the low SNR regime is to simply use all the power of the mUEs, where their phase values are irrelevant.\footnote{In the rare occasion when the delays of multiple mUEs overlap, the optimal strategy is derived in a similar manner. In result, the mUEs have to use all its power, and the received signals from the mUEs with the same delays must have equal phase values.} This strategy results in a drastic reduction of complexity, adequate for scenarios with huge numbers of sub-carriers.
	
	\section{Distinct extraction-based CRS} \label{sec:DDT}
	\subsection{Problem Formulation}
	Similar to IeCRS, we aim to maximize the minimum achievable rate of the system. The overall problem is formulated as
\begin{align}
\mathrm{(P4):} \max_{\bF, \bc, \bar{\bg}} &\min \left\{ R_1, \cdots, R_K, R_{\mathrm{d}}\right\} \notag \\
\text{s.t.} \ &C_{\mathrm{d},k} + C_k \leq R_{\mathrm{c},k}, \ \forall k \in \cK, \tag{4-a} \\
&R_{\mathrm{d}} = \sum_{k=1}^K \min \left\{C_{\mathrm{d},k},R_{\mathrm{d},k}^{(2)}\right\}, \tag{4-b} \label{eq: DDT rate} \\
&\eqref{eq: med rate}, \eqref{eq: AP power}, \eqref{eq: medium power}, \notag
\end{align}
	where $\bc = \left[ C_1, \cdots, C_K , C_{\mathrm{d},1} , \cdots, C_{\mathrm{d},K}\right]$ for DeCRS.
	\subsection{Proposed Technique}
	In this subsection, we solve (P4) through the WMMSE approach. We assume that the dUE uses the equalizer $w_{\mathrm{d},k}$ to estimate the stream $\hat{s}_{\mathrm{d},k} = w_{\mathrm{d},k} y_{\mathrm{d}}$. Similar to IeCRS, we define the MSEs of the $k$-th mUE as 
	\begin{align}
		\varepsilon_{\mathrm{c},k} &= |w_{\mathrm{c},k}|^2 T_{\mathrm{c},k} - 2 \Re \left\{ w_{\mathrm{c},k} \bh_k^{\mathrm{H}}\bff_{\mathrm{c},k}\right\} +1,\\
		\varepsilon_{\mathrm{p},k} &= |w_{\mathrm{p},k}|^2 T_{\mathrm{p},k} - 2 \Re \left\{ w_{\mathrm{p},k} \bh_k^{\mathrm{H}}\bff_k\right\} +1,
	\end{align}
	where $T_{\mathrm{c},k} =\sum_{i'=1}^K |\bh^{\mathrm{H}}_k \bff_{\mathrm{c},i'}|^2 + \sum_{i=1}^K |\bh^{\mathrm{H}}_k \bff_i|^2+1$ and $T_{\mathrm{p},k} = T_{\mathrm{c},k} - |\bh^{\mathrm{H}}_k \bff_{\mathrm{c},k}|^2$. We define the MSE of the stream for the dUE from the $k$-th mUE as 
	\begin{align}
		\varepsilon_{\mathrm{d},k} &= |w_{\mathrm{d},k}|^2 T_{\mathrm{d},k} - 2 \Re \left\{ w_{\mathrm{d},k} \bar{g}_k \right\} +1,
	\end{align}
	where $T_{\mathrm{d},k} = \sum_{i=1}^K \lvert \bar{g}_i \rvert^2 +1$. The definitions of the WMSEs and weights are neglected due to redundancy. 
	
	In result, (P4) can be transformed as 
\begin{align}
\mathrm{(P4.}&\mathrm{1):} \min_{\cV_{\mathrm{DeCRS}}} \  t_0  \notag \\
&\text{s.t.}  \ X_k + (\xi_{\mathrm{p},k} - 1) / \ln(2) \leq t_0, \ \forall k \in \cK, \tag{4.1-a}\\
& \qquad (\xi_{\mathrm{c},k}-1) / \ln (2) \leq X_{\mathrm{d},k} + X_k , \ \forall k \in \cK, \tag{4.1-b}\\
& \qquad X_{\mathrm{d},k} \leq t_k, \ \forall k \in \cK, \tag{4.1-c} \label{eq: DDT phase1}\\
& \qquad (\xi_{\mathrm{d},k}-1) / \ln(2) \leq t_k, \ \forall k \in \cK, \tag{4.1-d} \label{eq: DDT phase2} \\
& \qquad \sum_{k=1}^{K} t_k \leq t_0, \tag{4.1-e} \label{eq: DDT total} \\
& \qquad \eqref{eq: AP power}, \eqref{eq: medium power}, \notag
\end{align}
	where the set of optimization variables is defined as $\cV_{\mathrm{DeCRS}} = \left\{ \bF, \bx, \bw, \bmu, t_0, \bt \right\}$, with $\bx = -\bc$ and $\bt = \left[t_1 , \cdots, t_K\right]^{\mathrm{T}}$. By introducing the slack variables $\{t_k\}$ for every $k \in \cK$, the constraint \eqref{eq: DDT rate} is expressed by the constraints \eqref{eq: DDT phase1}-\eqref{eq: DDT total}, where \eqref{eq: DDT phase1} limits the rate of $C_{\mathrm{d},k}$, \eqref{eq: DDT phase2} limits the rate of $R_{\mathrm{d},k}^{(2)}$, and \eqref{eq: DDT total} sums up the rates of the mUEs. 
	Through the AO approach, (P4.1) is solved to reach a local optimum. The overall algorithm for DeCRS is shown in Algorithm 2.
	
	\begin{algorithm}[t]
		\begin{algorithmic} [1]
			\caption{Pseudo code for minimum rate maximization in DeCRS}
			\State \textbf{Initialization:} Set $\ell = 0, \bw^{(\ell)},$ and $\bmu^{(\ell)}$.
			\Repeat
			\State Set $\ell = \ell+1$.
			\State Solve (P4.1) with fixed $\bw$ and $\bmu$.
			\State Update $\bw^{(\ell)}$ and $\bmu^{(\ell)}$.
			\Until \ $t_0$ decreases by a fraction below a predefined threshold.
			\State $R_{\mathrm{d}} = -t_0$.
		\end{algorithmic}
	\end{algorithm} 
	
	\section{Channel Estimation} \label{sec: est}
	In the transmission framework of eCRS, the AP requires the CSI of both the downlink and THz cooperative channels to find the optimal beamformers for the system.
	In the reception process, the $k$-th mUE requires the CSI of the AP-($k$-th mUE) link, and the dUE requires the CSI of every link in the THz cooperative channel to decode the intended message.
	We exploit the channel reciprocity by employing the time division duplexing (TDD). 
	The channel of the AP-($k$-th mUE) link, which appears in (\ref{eq: 2}) and (\ref{eq: 14}) can be estimated by the conventional MU-MISO downlink channel estimation techniques \cite{CH1, CH2, CH3}. For example, the $K$ mUEs transmit orthogonal pilot signals to the AP, and the AP estimates the reciprocal channel of the AP-($k$-th mUE) link from the received pilot signal. We omit the detailed process of the channel estimation for the downlink channel since it is well described in other literatures.   
	
	We focus on the channel estimation for the ($k$-th mUE)-dUE link, which appears in (\ref{eq: 7}) and (\ref{eq: 19}). Although the channel has the form of the frequency selective channel, the conventional estimation technique is not suitable since the AP requires not only the channel gain but also the time delay of the specific mUE. 
	Hence, we propose a novel channel estimation technique for the THz cooperative channel, which first estimates the time delay $\tau_k$ and then the channel gain $g_k$.
	
	In prior, the $k$-th mUE and dUE share the information about the length $N_p$ pilot given as
	\begin{align}
		\boldsymbol{\psi}_k=\left[\psi_k[0], \cdots, \psi_k[N_p-1]\right]^\mathrm{T},
	\end{align}
	which satisfies $\lVert\boldsymbol{\psi}_k\rVert^2=N_p$.
	Thus, the $k$-th mUE transmits the signal during $N_p$ time slots given as
	\begin{align}
		x_{\mathrm{d}, k}[m]=\sqrt{P} {\psi}_k[m], \ m=0, \cdots, N_p-1,
	\end{align}
	where $P$ is the transmit power of the pilot signal. 
	The dUE receives the pilot signals through the THz cooperative channel during $N_r$ time slots. The length $N_r$ should satisfy the inequality $\tau_\mathrm{max}+N_p \le N_r$ so that every pilot signal is received at the dUE, where  $\tau_{\mathrm{max}}$ is defined as $\tau_{\mathrm{max}}=\max_k{\tau_k}$.
	The received signal at the dUE in the $m$-th time slot is given as
	\begin{align}
		y_\mathrm{d}[m] &= \sum_{k=1}^{K} g_{k} x_{\mathrm{d}, k}[m-\tau_k] + z_\mathrm{d}[m] \nonumber  \\ 
		&= \sum_{k=1}^{K} g_{k}  \sqrt{P} \psi_k[m-\tau_k] + z_\mathrm{d}[m].
	\end{align}
	The received signals during $N_r$ time slots can be reformulated as
	\begin{align}
		\by=\sqrt{P} \boldsymbol{\Psi} \bg + \bz,
	\end{align}
	where $\by=\left[y_{\mathrm{d}}[0], \cdots , y_{\mathrm{d}}[N_r-1] \right]^{\mathrm{T}}$, $\bg=\left[g_1, \cdots,  g_K\right]^\mathrm{T}$, and $\bz=\left[z_{\mathrm{d}}[0], \cdots,z_{\mathrm{d}}[N_r-1] \right]$. The matrix $\boldsymbol{\Psi}$ is defined as 
	$\boldsymbol{\Psi}=\left[{\boldsymbol{\psi}_{1}^{(\tau_{1})}}, \cdots, {\boldsymbol{\psi}_{K}^{(\tau_{K})}} \right]$
	where $\boldsymbol{\psi}_k^{(\tau)}$ is a $\tau$-shifted vector of the $\boldsymbol{\psi}_k$  defined as $\boldsymbol{\psi}_k^{(\tau)}=\left[\boldsymbol{0}_{\tau}^\mathrm{T}, \boldsymbol{\psi}_k^\mathrm{T}, \boldsymbol{0}_{(N_r-N_p-\tau)}^\mathrm{T}  \right]^\mathrm{T}$.
	
	\subsection{Time Delay Estimation}
	
	The dUE estimates the time delay $\tau_k$ and then regenerates the $k$-th column of the matrix $\boldsymbol{\Psi}$ by exploiting the estimated time delay $\hat{\tau}_k$ and pilot $\boldsymbol{\psi}_k$. To estimate the time delay $\tau_k$, the dUE projects the $\tau$-shifted vector ${\boldsymbol{\psi}_k^{(\tau)}}$ onto the received vector $\by$ such as
	\begin{align}
		r_k(\tau)&=\sqrt{P} \left\{  \underbrace{\left({\boldsymbol{\psi}_k^{(\tau)}}\right)^\mathrm{H}{\boldsymbol{\psi}_{k}^{(\tau_{k})}} g_k}_{\text{Signal term}} 
		+ \underbrace{\sum_{i \ne k} \left({\boldsymbol{\psi}_k^{(\tau)}}\right)^\mathrm{H} {\boldsymbol{\psi}_{i}^{(\tau_{i})}} g_i}_{\text{Interference term}}\right\} \nonumber \\ 
		&+\underbrace{\left({\boldsymbol{\psi}_k^{(\tau)}}\right)^\mathrm{H}\bz}_{\text{Noise term}}, \label{eq: 45}
	\end{align}
	where $\tau$ ranges from $0$ to $\tau_{\mathrm{max}}$.
	We propose a maximum projection (MP) estimator given as
	\begin{align}
		\hat{\tau}_k=\argmax_{\tau} \lvert r_k(\tau)\rvert, 
	\end{align}
	which searches for $\tau$ such that the magnitude of the projected value is maximized. 
	To fully utilize the MP estimator, we implement the pilot that has a pseudo-noise property, where the auto-correlation is given as
	\begin{align}
		\mathrm{R}_k(\tau_1-\tau_2)=\left| \left( {\boldsymbol{\psi}_{k}^{(\tau_1)}}\right)^{\mathrm{H}}  {\boldsymbol{\psi}_{k}^{(\tau_2)}} \right| \ll \norm{\boldsymbol{\psi}_{k}}^2, \ \forall \tau_1\ne \tau_2. \label{eq: 47}
	\end{align}
	We also assume that the pilot has the cross-correlation given as
	\begin{align}
		\mathrm{R}_{k, k'}(\tau_1, \tau_2)=\left| \left( {\boldsymbol{\psi}_{k}^{(\tau_1)}}\right)^{\mathrm{H}}  {\boldsymbol{\psi}_{k'}^{(\tau_2)}}  \right| \approx 0, \ \forall k \ne k',    \label{eq: 48}
	\end{align}
	which can suppress the interference term in (\ref{eq: 45}). 
	With the idealistic properties of the pilot, the magnitude of the projected value $r_k(\tau)$ is given as \begin{align}
		\lvert r_k(\tau) \rvert 
		&\approx\left| \sqrt{P} \left({\boldsymbol{\psi}_k^{(\tau)}}\right)^\mathrm{H}{\boldsymbol{\psi}_{k}^{(\tau_{k})}} g_k +\left({\boldsymbol{\psi}_k^{(\tau)}}\right)^\mathrm{H}\bz \right| \\
		& \ll \lvert r_k(\tau_k) \rvert, \ \forall \tau \ne \tau_k.  \notag
	\end{align}
	Hence, the MP estimator can estimate the exact time delay $\tau_k$ with the idealistic properties.
	In this paper, we implement the Zadoff-chu sequence in \cite{zadoff} for the pilot, where the idealistic properties hold when the sequence length is sufficiently long.
	
	\subsection{Channel Gain Estimation}
	From the estimated time delay $\hat{\tau}_k$ and pilot $\boldsymbol{\psi}_k$, the dUE regenerates the matrix $\boldsymbol{\Psi}$ such as
	\begin{align} \label{eq: 51}
		\hat{\boldsymbol{\Psi}}=\left[{\boldsymbol{\psi}_{1}^{(\hat{\tau}_{1})}}, \cdots, {\boldsymbol{\psi}_{K}^{(\hat{\tau}_{K})}} \right].
	\end{align}
	We implement the least-square (LS) estimation technique to estimate the channel such as
	\begin{align}
		\hat{\bg}=\left(\hat{\boldsymbol{\Psi}}^\mathrm{H}\hat{\boldsymbol{\Psi}}\right)^{-1}\hat{\boldsymbol{\Psi}}^\mathrm{H}\by.
	\end{align} 
	In our scenario of interest, the dUE cannot feedback the CSI to the AP through the THz frequency bands since the LoS link between the AP and dUE is blocked. In practice, mobile devices may be able to utilize multiple frequency bands. Hence, we assume that the dUE feeds back the CSI of THz cooperative channel through lower frequency bands, where only a limited amount of the message transmission is available.

	\section{Simulation Results} \label{sec: simul}
	In this section, we verify the performances of the channel estimation technique and the two cases of eCRS with one SIC layer. For the simulations, the carrier frequency and bandwidth are fixed as $f_c = 0.3$ THz and $B = 1 $ GHz, respectively, and the noise power spectral density is fixed as $N_0 = -174$ dBm/Hz. 
	Unless stated otherwise, the power of the AP is assumed as $P_{\mathrm{AP}} = 20$ dBm, and the power of the mUEs is fixed as $P_k = 0$ dBm. The AP and dUE are located at $[0,4,1]$ m and $[8,4,0]$ m, respectively, and the mUEs are uniformly spread in a box with diagonal coordinates $[2,0,0]$ m and $[6,8,0]$ m. 
	
	\begin{figure}%	
		\subfloat[The auto-correlation and cross-correlation of the Zadoff-chu sequence.]{{\includegraphics[width=0.5\textwidth ]{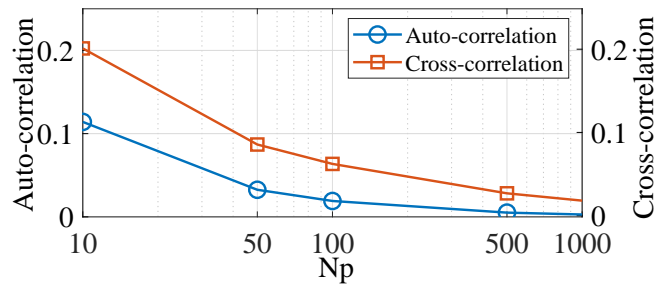}}}% 		
		\hfil
		\subfloat[The DER and NMSE performances.]{{\includegraphics[width=0.5\textwidth ]{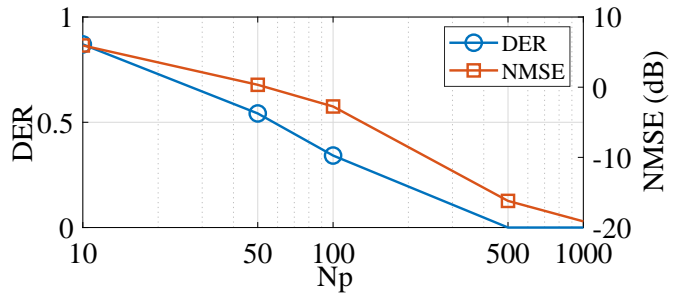}}}%		
		\caption{Performances of the proposed channel estimation technique with respect to the pilot length  $N_p$, where $K=5$ and $N_t=16$.}% 		
		\label{fig: Np}% 		
	\end{figure}
	
	The performance of the channel estimation technique is verified with the delay error rate (DER) of the time delay and the normalized mean square error (NMSE) of the channel gain, which are given as
	\begin{align}
		\mathrm{DER}= \frac{1}{K} \sum^K_{k=1}  \mathrm{Pr}\left( \hat{\tau}_k \ne \tau_k\right),
	\end{align}
	\begin{align}
		\mathrm{NMSE}=\mathbb{E} \left[\frac{ \norm{\hat{\bg}-\bg}^2}{\norm{\bg}^2} \right],
	\end{align}
	respectively.
	
	In Fig. \ref{fig: Np} (a), we measure the auto-correlation and cross-correlation of the Zadoff-chu sequence with respect to the pilot length. 
	The auto-correlation value is averaged out for the cases of $\tau_1\ne\tau_2$. The auto-correlation and cross-correlation are normalized with $N_p$ so that the auto-correlation for the case of $\tau_1=\tau_2$ is $1$. We observe that the auto-correlation and cross-correlation decrease as the pilot length increases. Hence, we implement the Zadoff-chu sequence with sufficiently long pilot lengths, which approximates to the idealistic properties given in (\ref{eq: 47}) and (\ref{eq: 48}). In Fig. \ref{fig: Np} (b), the DER and NMSE are measured with respect to the pilot length to verify the performance of the proposed channel estimation technique. The DER decreases as the pilot length increases and eventually saturates to zero. This is because the DER performance strongly depends on the idealistic properties in \eqref{eq: 47} and (\ref{eq: 48}). The NMSE of the channel gain also decreases as the pilot length increases since the matrix $\hat{\boldsymbol{\Psi}}$ for the LS estimation is highly related to the DER performance.

	In Fig. \ref{fig:Channel est}, we investigate the DER and NMSE of the proposed channel estimation technique with respect to the transmit power of the mUEs. We set the transmit power of every mUE to be equal and the pilot length as 100. We also investigate three different cases by changing the number of mUEs as 5, 10, and 15. In Fig. \ref{fig:Channel est} (a), the DER decreases as the transmit power increases for all three cases due to the increasing signal-to-interference-plus-noise ratio (SINR). However, in the high transmit power regime, the DER tends to saturate since the SINR saturates as the transmit power increases.  
	As the number of mUEs increases, the DER increases because the interference term in \eqref{eq: 45} increases due to the degradation of the cross-correlation property. For the NMSE in Fig. \ref{fig:Channel est} (b), all three cases decrease as the transmit power increases. In the high power regime, the cases with the number of mUEs of $10$ and $15$ saturate because the DER directly affects the channel gain estimation through the estimated delays $\left\{ \hat{\tau}_1, \cdots,\hat{\tau}_K \right\}$.

	\begin{figure}%	
		\subfloat[The DER performance with respect to the power of the mUEs.]{{\includegraphics[width=0.5\textwidth ]{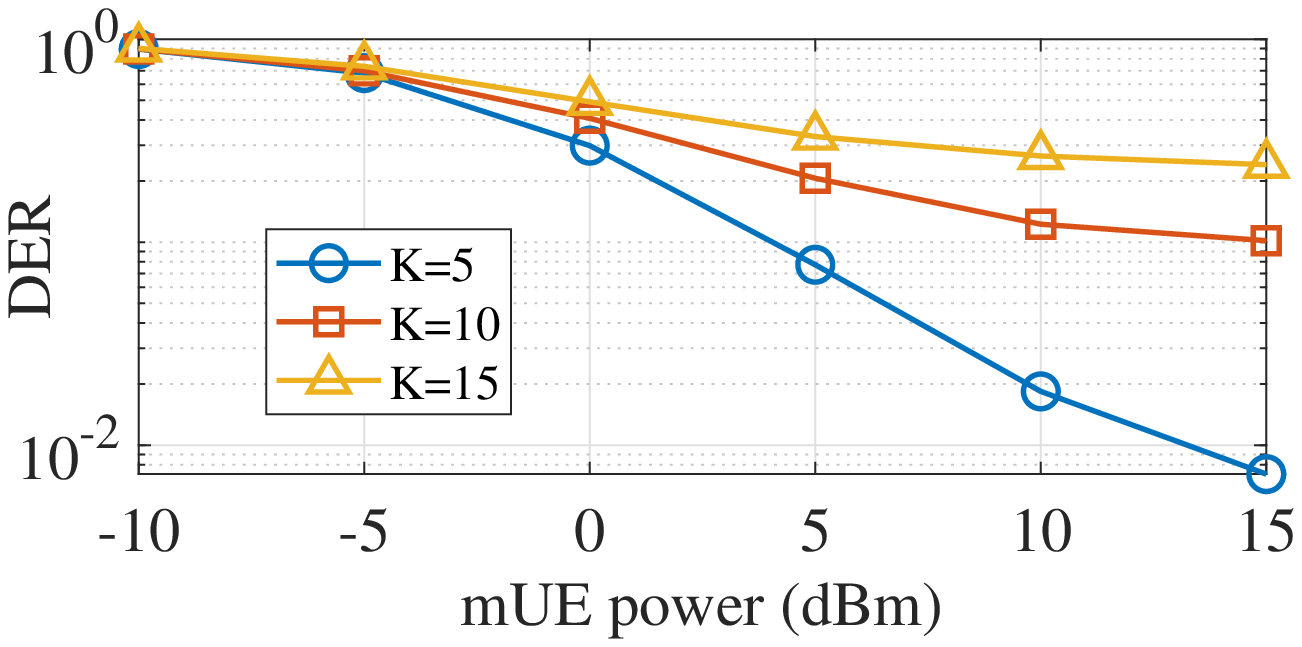}}}% 		
		\hfil
		\subfloat[The NMSE performance with respect to the power of the mUEs.]{{\includegraphics[width=0.5\textwidth ]{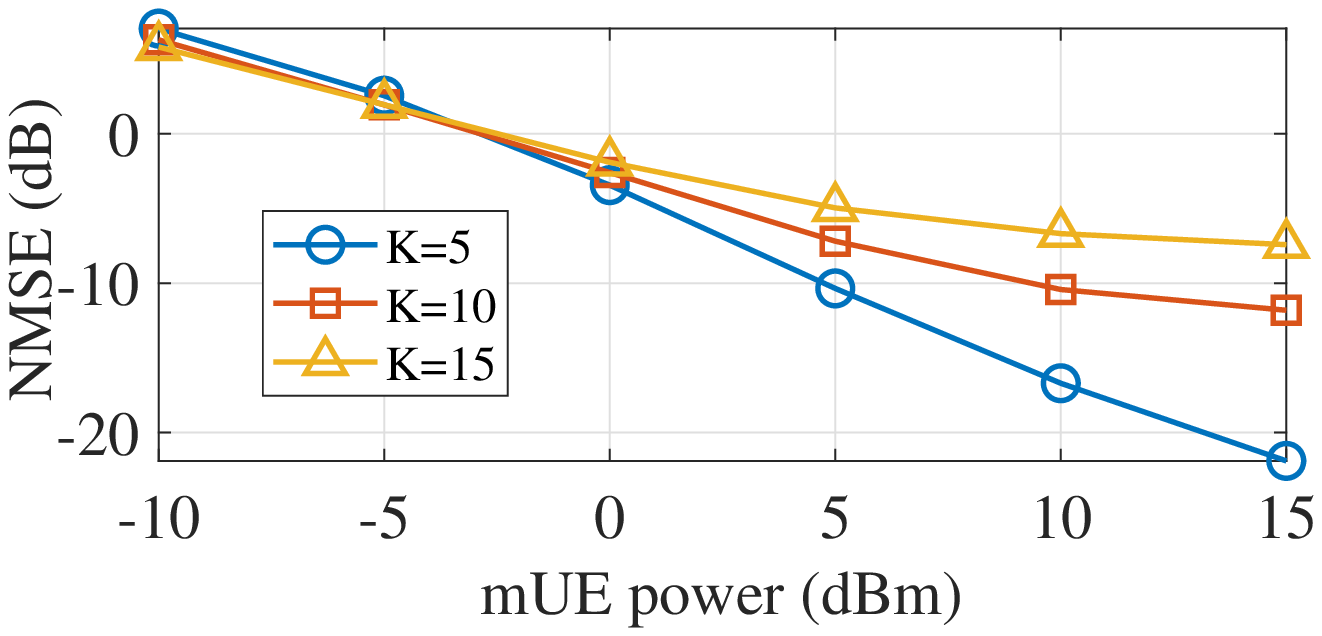}}}% 					
		\caption{Performances of the proposed channel estimation technique with respect to the power of the mUEs, where $N_t=16$.}% 		
		\label{fig:Channel est}% 		
	\end{figure}
	
	The cases of eCRS with one SIC layer are denoted as IeCRS, DeCRS, and LOW, where IeCRS and DeCRS adopt the convex optimization approach in IeCRS and DeCRS cases, respectively, and LOW adopts the low complexity approach for IeCRS. 
	To compare the results of our proposed framework, we propose three types of benchmarks, namely, identical cooperative NOMA (IC-NOMA), distinct cooperative NOMA (DC-NOMA), and single tap (ST). IC-NOMA and DC-NOMA are IeCRS and DeCRS without using common messages, similar to NOMA, respectively. 
	ST is the ideal case where the signals from the mUEs arrive simultaneously while using IeCRS. 
	The precoders for the first phase of IeCRS, LOW, IC-NOMA, and ST are initialized by using maximum ratio transmission (MRT) combined with singular value decomposition (SVD) as in \cite{RSMA2}. The precoders for the first phase of DeCRS and DC-NOMA are initialized by using MRT for both the private and common streams. The second phase for all cases is initialized by the mUEs using all their power.
	Although these benchmarks are valid, they are all special cases of our proposed framework.
	To the best of our knowledge, we could not find any adequate benchmark that could adapt to our scenario of interest. 
	Thus, we verify our framework with the benchmarks stated above.
	
	\begin{figure}[t] 
		\centering
		\includegraphics[width=1 \columnwidth]{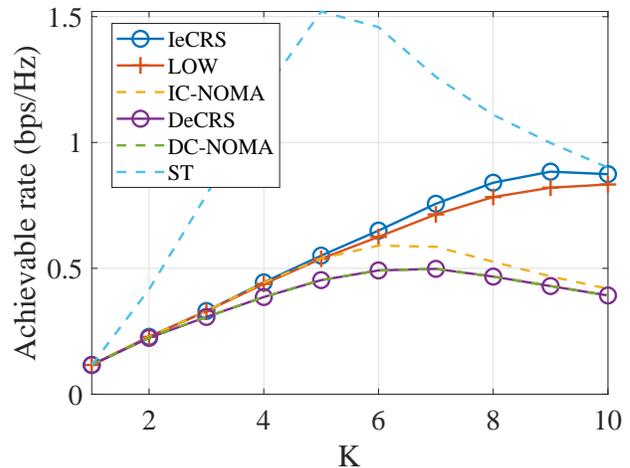}
		%   % where an .eps filename suffix will be assumed under latex,
		%   % and a .pdf suffix will be assumed for pdflatex
		\caption{The achievable rate with respect to the number of mUEs $K$, where $N_t = 32$.} 
		\label{fig: AboutK_Ben}
	\end{figure}
	
	In Fig. \ref{fig: AboutK_Ben}, we plot the achievable rate with respect to the number of mUEs. 
	Unlike conventional transmission techniques, where the rate decreases as the number of mUEs increases, we observe that there exists a certain performance peak for our proposed framework.
	This peak is due to the two-phase nature of eCRS. For the first phase, the transmission is similar to the conventional MU-MISO downlink channel, where the achievable rate decreases as the number of mUEs increases. 
	For the second phase, as the number of mUEs increases, the achievable rate of the dUE naturally increases. 
	In result, the performance increases for a small number of mUEs, where the performance bottleneck is from the second phase, and the performance decreases for a large number of mUEs due to the bottleneck of the first phase. 
	We observe that IeCRS and LOW have similar performances for small and large numbers of mUEs. 
	This is expected, as our derivation was from a low SNR assumption, thus, the performance gap for a small number of mUEs is expected to be tight. 
	The performance gap steadily increases as the number of mUEs increases and becomes tight again since the first phase becomes the bottleneck. 
	DeCRS seems to have lower performance compared to IeCRS and LOW. 
	This is due to two factors. First, the achievable rate that we consider is already a lower bound. 
	Similar to IeCRS, different signals from the mUEs arrive in different instances in general. 
	Since we neglected this factor, the performance is degraded. 
	Also, DeCRS adopts more streams for both the first and second phases.
	Since the mUEs and dUE are all equipped with a single antenna, the interference is crucial. We expect that with multiple antenna mUEs and dUEs, the performance of DeCRS will increase drastically, which is left for future work. 
	Finding the balance of the interference and noise through the common message of the mUEs, we observe that IeCRS outperforms IC-NOMA. 
	However, as expected, DeCRS has the same performance as DC-NOMA since the common messages do not control the interference between the mUEs.
	We observe that due to the signals arriving in different taps, the performance of IeCRS is upper bounded with ST, showing that there is an inevitable performance loss to consider fast-sampling communication systems.
	Note that, the signals arriving in different taps is beneficial for DeCRS, since the interference will decrease. However, this will be detrimental for IeCRS, since this would decrease the strength of the overall signal.
	
%	\begin{figure}[t] 
%		\centering
%		\includegraphics[width=1 \columnwidth]{prop_bench_power.eps}
%		\caption{The achievable rate with respect to the power of mUEs, where the downlink channels are heterogeneous, $K = 10$ and $N_t=16$.} 
%		\label{fig: AboutP_Ben}
%	\end{figure}
	
%	In Fig. \ref{fig: AboutP_Ben}, we investigate how the heterogeneous downlink channels affect the achievable rate with respect to the transmit power of the mUEs.
%	Among ten mUEs, we assume that the downlink channel for eight mUEs experience additional power loss of -20 dB.
%	The achievable rate for our framework increases until it saturates in the high power regime. 
%	This is because the performance of the two-phase C-RSMA is bounded by the first phase in the high power regime. 
%	While the achievable rate of the second phase increases as the power of the mUEs increases, the achievable rate of the first phase is not affected. In the scenario of heterogeneous downlink channels, we observe that NDCDT outperforms NICDT.
%	This is because the achievable rate of the dUE for NICDT is bounded by the weak channel mUEs in the first phase. 
%	On the other hand, ICDT outperforms DCDT, since ICDT can fully exploit the benefit of common data compared to DCDT. Hence, the gap between NICDT is significant compared to the gap between DCDT and NDCDT.  
%	This shows the superior performance of the concept of common data.
	
	\begin{figure}[t] 
		\centering
		\includegraphics[width=1 \columnwidth]{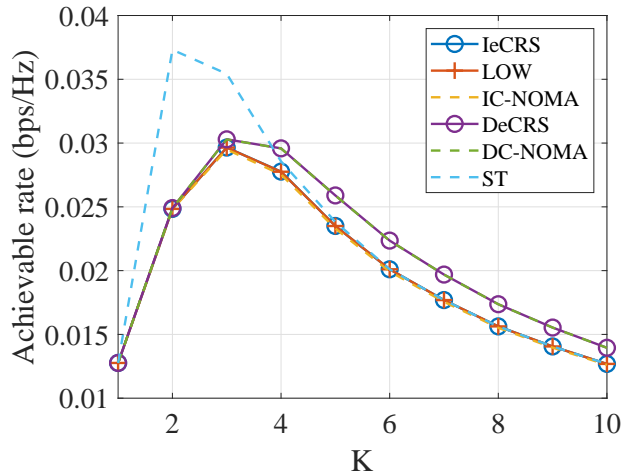}
		\caption{The achievable rate with respect to the number of mUEs $K$, where $N_t = 16$, $P_{\mathrm{AP}} = -10$ dBm, and $P_k = -10$ dBm.} 
		\label{fig: AboutK_SmallP}
	\end{figure}

	In Fig. \ref{fig: AboutK_SmallP}, we plot the achievable rate with respect to the number of mUEs, where the AP and mUEs have low transmit power. Similar to Fig. \ref{fig: AboutK_Ben}, all cases have performance peaks, where the peaks are shifted to the left due to the low transmit power of the AP.
	Another result to note is that DeCRS outperforms IeCRS.
	For DeCRS, the AP can concentrate the power to specific mUEs by selecting a few mUEs for cooperation.
	We also observe that there is no significant performance improvement of IeCRS compared to IC-NOMA, since the benefit of the common message that finds the balance between the interference and noise is limited due to the dominant AWGN.
%	IC-NOMA even shows a higher achievable rate than IeCRS when there are many mUEs, which is unexpected since IeCRS is a superset of IC-NOMA.
%	However, this is possible because the WMMSE approach in IeCRS can converge to a local optimum.

	\begin{figure}[t] 
		\centering
		\includegraphics[width=1 \columnwidth]{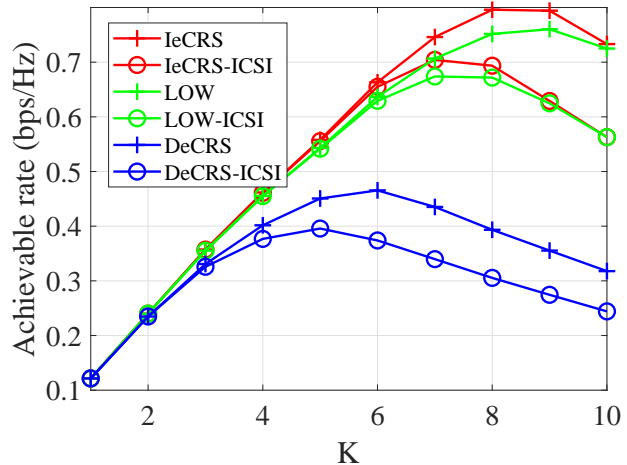}
		%   % where an .eps filename suffix will be assumed under latex,
		%   % and a .pdf suffix will be assumed for pdflatex
		\caption{The achievable rate with respect to the number of mUEs $K$, where $N_t = 16$ and $N_p = 500$.}
		\label{fig: AboutK_ICSI}
	\end{figure}
	
	In Fig. \ref{fig: AboutK_ICSI}, we compare the performances of eCRS for perfect CSI and imperfect CSI cases with respect to the number of mUEs.
	%We also simulate cases where the decoding order of the common and private data are reversed, denoted as REV.
	Similar to Fig. \ref{fig: AboutK_Ben}, the achievable rate seems to have a peak for eCRS, where the performance bottleneck shifts from the second phase to the first phase as the number of mUEs increases.
	%We observe that for ICDT, the decoding order affects the performance significantly, while for DCDT, the performance is not affected.
	%ICDT controls the common data for all the mUEs, while each common data in DCDT controls only a single mUE.
	%Thus, the degree-of-freedom (DoF) advantage from the decoding order appears harshly in ICDT-REV than DCDT-REV.
	For all cases, the imperfect CSI cases show performance gaps between the perfect CSI cases as the number of mUEs increases. This is related to the performance of channel estimation, where the estimation error increases as the number of mUEs increases.

	\section{Conclusion} \label{sec:concl}
	In this paper, we proposed a novel message transmission framework to increase the coverage for a THz MU-MISO downlink system through cooperative communication using RSMA. 
	For message transmissions, we proposed eCRS, and explored two specific cases, which are IeCRS and DeCRS.
	Based on the novel THz cooperative channel model, we derived local optimal solutions of IeCRS and DeCRS through convex optimization techniques as well as a closed form solution for IeCRS in the low SNR regime.
	Finally, to successfully use our framework in practice, we proposed a channel estimation technique to detect the channel gains and time delays of the THz cooperative channel model. 
	Through simulation results, we confirmed that our proposed message transmission framework has considerable performance, and that our estimation technique successfully captures the full capabilities of the THz cooperative channel model.
	
	%\appendix
	%%%%%%%%%%%%%%%%
	%% Background %%
	%%

	\bibliographystyle{IEEEtran}
	% argument is your BibTeX string definitions and bibliography database(s)
	\bibliography{HyebibTHz}
	
\end{document}